\DeclarePairedDelimiter{\ceil}{\lceil}{\rceil}
\newtheorem{theorem}{Theorem}\newtheorem{corollary}[theorem]{Corollary}
\newtheorem{lemma}[theorem]{Lemma}
\newcommand{\dsp}[2]{$#1$-Disjoint Shortest Paths with Congestion-$#2$\xspace}
\newcommand{\sd}[1]{$#1$-Disjoint Shortest Paths\xspace}
\newcommand{\kdp}[1]{$#1$-Disjoint Paths Problem\xspace}
\newcommand{\disp}[2]{$#1$-Disjoint Paths with Congestion-$#2$\xspace}
\newenvironment{apthm}[1]{\par\addvspace{3mm}\noindent\textbf {Theorem~\ref{#1}.\;}}{\par\addvspace{3mm}}
\newenvironment{apcor}[1]{\par\addvspace{3mm}\noindent\textbf {Corollary~\ref{#1}.\;}}{\par\addvspace{3mm}}
\Crefname{lemma}{Lemma}{Lemmas}
\Crefname{figure}{Figure}{Figures}
\Crefname{corollary}{Corollary}{Corollaries}
\Crefname{definition}{Definition}{Definitions}
\title{Disjoint Shortest Paths with Congestion on DAGs}
\author{Saeed Akhoondian Amiri \and Julian Wargalla \thanks{Department of Computer Science, University of Cologne, Germany, \{amiri,wargalla\}@cs.uni-koeln.de}}
\date{}
\begin{document}	

	\maketitle	
	
	\begin{abstract}
	\noindent In the \sd{k} problem, a set of terminal pairs of vertices $\{(s_i,t_i)\mid 1\le i\le k\}$ is given and we are asked to find paths $P_1,\ldots,P_k$ such that each path $P_i$ is a shortest path from $s_i$ to $t_i$ and every vertex of the graph routes at most one of them. 		We introduce a generalization of the problem, namely, \dsp{k}{c} where every vertex is allowed to route up to $c$ paths. 	
	
	We provide a simple algorithm to solve the problem in time $f(k) n^{O(k-c)}$ on DAGs. Using the techniques for DAGs, we show the problem is solvable in time $f(k) n^{O(k)}$ on general undirected graphs. Our algorithm for DAGs is based on the earlier algorithm for \disp{k}{c}~\cite{amiriipl}, but we significantly simplify their argument.			 
	
	Then we prove that it is not possible to improve the algorithm significantly by showing that for every constant $c$ the problem is W[1]-hard w.r.t.\ parameter $k-c$. We also consider the problem on acyclic planar graphs, but this time we restrict ourselves to the edge-disjoint shortest paths problem. We show that even on acyclic planar graphs there is no $f(k)n^{o(k)}$ algorithm for the problem unless ETH fails. 	
	\end{abstract}
\thispagestyle{empty}
\clearpage
	\setcounter{page}{1}
	\section{Introduction}
	The \kdp{k} is one of the fundamental connectivity problems in graph theory. Given a graph $G$ and a set of source and terminal pairs $\{(s_i,t_i):1\le i\le k\}$ the goal is to connect every source $s_i$ to $t_i$ by internally vertex disjoint paths. The problem plays a central role in proving the graph minor theorem algorithmically~\cite{ROBERTSON199565}.
	
	One can relax the problem by allowing higher congestion on the nodes and edges of the graph. Such a generalization is relevant in practice: for instance, in communication networks, we can tolerate a certain amount of congestion. Routing disjoint paths with congestion have been extensively studied in the literature e.g., see~\cite{congestiongeneral,icalp,KawarabayashiKK14,ChekuriE13,ChekuriKS09}.	
	
	Another practical variation of the problem is to find disjoint paths that are shortest with respect to certain measures: one can require that every path has to be a shortest path, that the total length is minimized, that the maximum path length is minimized, etc. (see~\cite{shortest2,shortestplanar,shortest2g,planarshortest} for some of such cases).
	
	Most of the aforementioned variations add a criterion that makes the problem more difficult than the classical disjoint paths problem. However, one variation might make the problem easier: when every path connecting a source to a terminal should also be a shortest such path. This is known as the \sd{k} problem (\textbf{$k$-DSP}). 
	
	Not much was known about $k$-DSP since it had been raised as an open problem over 20 years ago~\cite{EILAMTZOREFF1998113}. The only known results were for the restricted case of two paths for undirected graphs~\cite{operationresearch,EILAMTZOREFF1998113} and  digraphs~\cite{shortest2}. Only recently, two separate studies provided algorithms with running time of $n^{f(k)}$~\cite{lochet,undirectedshortest2} for $k$-DSP on undirected graphs. The problem is wide open for general digraphs, even for the special case of $3$-DSP.	
	
	In this paper, we consider a generalized version of the $k$-DSP, namely the \dsp{k}{c} problem. This is similar to the $k$-DSP except that every vertex can tolerate a congestion $c$, that is, it can route up to $c$ paths.  
	
	We study the computational complexity of the problem on directed acyclic graphs (DAGs). DAGs are valuable in simulating scheduling problems. In such simulations, finding disjoint paths, and in particular disjoint paths of short length is quite important. For a nice example of such simulations, we refer the reader to the introduction of~\cite{dags}. 	Besides their practical applications, DAGs also form basic building blocks and can be used to study the theoretical aspects of general digraphs. Several digraph width measures are designed to measure similarity of the input graph to DAGs~\cite{dtw,dagwidth,dagwidth1,dwidth}.

	It is possible to devise an $n^{O(k)}$ algorithm for our problem on DAGs (e.g. see~\Cref{lem:congesteddsp}). However, we investigate the possibility of performing better when the congestion is close to $k$. The intuition is that if the congestion is $k$ we only have to route the shortest paths. We seek an algorithm with running time $f(k)n^{g(d)}$, where $d = k - c$ and $f$ and $g$ are computable functions.
	
	Since the problem is new, nothing is known on undirected graphs. We show it is possible to convert the existing works on $k$-DSP to \dsp{k}{c} to obtain the desired solution. In this case, we parameterize the problem by $k$ instead of $d$. The existing $k$-DSP algorithms are only relevant from a theoretical perspective since their running times are doubly and triply exponential~\cite{lochet,undirectedshortest2}. Since we do not improve the $k$-DSP algorithms we will reach a similar running time.
	
	\subsection*{Our Contribution and Comparison to the Previous Work}
	
We first show that the problem is solvable in polynomial time on DAGs for constant values of $k$ and $d$ by providing a $f(k) n^{O(d)}$ algorithm, i.e.\ we prove the following theorem.

	\begin{theorem}\label{thm:main}	The \dsp{k}{c} problem in acyclic graphs can be solved in time $f(k)n^{O(d)}$.\end{theorem}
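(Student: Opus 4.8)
The plan is to reduce the problem to routing paths inside ``layered'' shortest-path subgraphs and then run a topological-order dynamic program whose state, thanks to that layering together with the congestion bound, needs only $O(d)$ genuine coordinates. First I would preprocess each terminal pair: for every $i$ compute $\mathrm{dist}(s_i,\cdot)$ and $\mathrm{dist}(\cdot,t_i)$ in the DAG and let $H_i$ be the subgraph on the vertices $v$ with $\mathrm{dist}(s_i,v)+\mathrm{dist}(v,t_i)=\mathrm{dist}(s_i,t_i)$, keeping an edge $(u,w)$ exactly when both endpoints survive and $\mathrm{dist}(s_i,w)=\mathrm{dist}(s_i,u)+1$. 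Then $P_i$ is a shortest $s_i$--$t_i$ path iff it is an $s_i$--$t_i$ path in $H_i$, and $H_i$ is layered: every edge of $H_i$ raises $\mathrm{dist}(s_i,\cdot)$ by exactly one. If some $H_i$ has no $s_i$--$t_i$ path we reject; otherwise the task becomes purely combinatorial --- choose an $s_i$--$t_i$ path $P_i$ inside each $H_i$ so that every vertex of $G$ lies on at most $c$ of the $P_i$.

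Next I would fix a topological order $v_1,\dots,v_n$ of $G$ and sweep through it, deciding at step $j$ which paths pass through $v_j$. Because edges point forward, $P_i$ restricted to $\{v_1,\dots,v_j\}$ is always a prefix of $P_i$ and the number of paths on a vertex is final once the sweep leaves it; so it suffices to carry, after step $j$, the tuple of current heads (for each $i$ its last chosen vertex, or a ``not started''/``finished'' flag), a transition picking a subset $S$ of paths to route through $v_{j+1}$, legal iff $|S|\le c$ and $v_{j+1}$ is a valid next vertex in $H_i$ from the head of $i$ for each $i\in S$. Carrying the full head tuple gives $n^{O(k)}$ states and already reproves \Cref{lem:congesteddsp}. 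The improvement is that one need not remember the exact head of a path whose remaining route is ``uncontested'': exploiting that each $H_i$ is layered and that, by $c=k-d$, no vertex can ever carry more than $k-d$ paths, one argues that the configuration crossing the current frontier is faithfully encoded by $O(d)$ vertex-valued coordinates plus $f(k)$-bounded bookkeeping recording which path indices those coordinates govern. Hence the dynamic program has only $f(k)\,n^{O(d)}$ reachable configurations, and each transition costs $f(k)\,n^{O(1)}$ time, for a total of $f(k)\,n^{O(d)}$.

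Correctness is the usual two directions: a genuine family of shortest paths, read off along the topological order, induces an accepting run, and an accepting run decodes back into paths $P_i\subseteq H_i$ with load at most $c$ everywhere. The main obstacle --- and exactly the point where I would follow, but simplify, the argument of \cite{amiriipl} --- is establishing this compression: pinning down the right notion of ``frontier configuration'' and proving an interchangeability lemma, namely that two sweep states with the same configuration extend to solutions in precisely the same ways, so that only $O(d)$ real coordinates (together with the $f(k)$ combinatorial data) ever matter. Everything else --- the distance preprocessing, the topological sweep, checking a transition, and the accounting that converts ``$O(d)$ coordinates'' into the stated running time --- is routine, so the write-up should isolate this one structural lemma and build \Cref{thm:main} on it.
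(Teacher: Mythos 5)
Your proposal defers the entire difficulty to an unproven claim, and that claim is exactly where the theorem lives. The layered subgraphs $H_i$, the topological sweep, and the $n^{O(k)}$ head-tuple DP are all fine (the last essentially reproves \Cref{lem:congesteddsp}), but the assertion that the frontier configuration ``is faithfully encoded by $O(d)$ vertex-valued coordinates plus $f(k)$-bounded bookkeeping'' is not argued at all, and it is far from clear that it is true in the form you state. At any point of the sweep up to $k$ paths are in flight with distinct heads, and the feasibility of a continuation depends on how many of them are forced through each future vertex; the bound $c=k-d$ on per-vertex congestion does not by itself tell you which $k-O(d)$ heads are irrelevant, because the set of ``contested'' paths can change as the sweep progresses and the head of a currently uncontested path still constrains where it can collide later. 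You would need to actually state and prove the interchangeability lemma, and I see no evident route to it along these lines.

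The paper avoids this frontier-compression problem entirely by a kernelization argument that is global rather than sweep-based. The key structural result (\Cref{lem:reduction}) says that for $k>3d$ a solution exists iff every pair is connected by some path and some set of $3d$ terminal pairs admits a solution with congestion $2d$. Its proof repeatedly reroutes: using the observation that any three congestion-$c$ vertices lie on a common path of the solution, one swaps subpaths between two paths (preserving all congestions and all lengths, which is why the argument survives the shortest-path constraint) until a single path collects every congestion-$c$ vertex; deleting that path's pair lets one decrement both $k$ and $c$ and recurse. The algorithm then guesses the $3d$ pairs ($\binom{k}{3d}=f(k)$ choices) and routes them with the $n^{O(k)}$ algorithm applied with $k=3d$, giving $f(k)n^{O(d)}$; the remaining pairs are routed by arbitrary shortest paths. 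If you want to salvage your DP approach, you would in effect have to prove something at least as strong as \Cref{lem:reduction}, so I recommend you adopt the rerouting argument rather than pursue the state-compression route.
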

	
Our algorithm is based on the algorithm for \disp{k}{c} on DAGs~\cite{amiriipl} and matches its running time.	Our technical contribution in that part is the major simplification of the analysis of the correctness of the existing algorithm. Let us elaborate on this. The analyses of both papers are based on a kernelization idea: find a small core of terminal pairs and argue that it is enough to just route those. 

The argumentation is based on the fact that we can reroute paths so that eliminating one terminal pair allows routing with less congestion. To find such a pair in the previous work the argument was that we can make a specific rerouting so that there is no high congested vertex. Adding some other conditions such a rerouting was called \emph{conservative} routing in~\cite{amiriipl}. 

The definition of such a rerouting added complexities to the analysis so that it was not possible to break that proof into smaller pieces. In the present work, we gradually improve a single path, not the entire solution and our argument is not on the global behavior of the solution. Thus, even though similar to previous work we employ the topological order of vertices, but we do not need to have such an order for the entire graph: it is enough to have a good order on certain rerouting points for that specific path. 

Our simplified method may open doors for dealing with similar problems on undirected graphs; since there we can define certain order for some set of vertices, not necessarily on the entire graph. Given the recent polynomial-time algorithm for $k$-disjoint shortest paths on undirected graphs~\cite{lochet,undirectedshortest2}, this subtle difference in analysis shows its importance.

\medskip

	We use one of the ingredients from the proof of the above theorem to show that every algorithm for $k$-DSP (in general graphs), transfers to the \dsp{k}{c} in linear time. However, in this transformation running time is dependent on $k$ and $n$.


\begin{corollary}\label{cor:transfer}
	Let $\mathcal{A}$ be an algorithm that solves an instance of $k$-DSP on (un)directed graphs in time $f(n,k)$, where $n$ is the size of the input graph. Then there is an algorithm $\mathcal{B}$ that solves an instance of \dsp{k}{c} on (un)directed graphs in time $f(n,k) + O(cn)$. In particular, for any fixed $k$, \dsp{k}{c} is solvable on undirected graphs in polynomial time.
\end{corollary}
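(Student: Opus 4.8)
The plan is to reduce \dsp{k}{c} to ordinary $k$-DSP by the ``vertex multiplication'' gadget that already underlies the proof of \Cref{thm:main}, so that $\mathcal B$ simply constructs the gadget and invokes $\mathcal A$ once. Given an instance $(G,\{(s_i,t_i)\}_{1\le i\le k},c)$ I would first dispose of two degenerate cases: if more than $c$ of the pairs share a common source or sink, report infeasibility (such a vertex is forced to carry congestion exceeding $c$); and if $c\ge k$, output one BFS-shortest path per pair, which is automatically a feasible solution. Otherwise I build $G'$ by replacing each vertex $v$ of $G$ with $c$ copies $v^{(1)},\dots,v^{(c)}$ and, for every edge $uv$ of $G$ and all $a,b\in[c]$, adding the edge $u^{(a)}v^{(b)}$ (retaining orientations in the DAG case); the new terminals of pair $i$ are $s_i^{(\sigma(i))}$ and $t_i^{(\tau(i))}$, where the indices $\sigma(i),\tau(i)$ are chosen so that, at every vertex of $G$, the pairs using it as a source or a sink receive pairwise distinct indices --- which the first reduction step guarantees is possible. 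The map $v^{(a)}\mapsto v$ makes each copy of $v$ inherit the whole adjacency pattern of $v$, so $d_{G'}(u^{(a)},v^{(b)})=d_G(u,v)$ for all copies; this distance preservation is the ingredient borrowed from \Cref{thm:main}.

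For correctness, if $P_1',\dots,P_k'$ solve $k$-DSP in $G'$, then projecting each $P_i'$ back to $G$ yields a walk from $s_i$ to $t_i$ of length $d_{G'}(s_i^{(\sigma(i))},t_i^{(\tau(i))})=d_G(s_i,t_i)$, hence a shortest path, and since $v$ has only $c$ copies and the $P_i'$ are vertex-disjoint, no vertex of $G$ is used by more than $c$ of the projected paths. Conversely, from a feasible family $(P_i)$ in $G$, for each vertex $v$ I consider $I_v=\{i:v\in V(P_i)\}$; feasibility gives $|I_v|\le c$, so there is an injection $\chi_v\colon I_v\to[c]$ that moreover sends $i$ to $\sigma(i)$ whenever $v=s_i$ and to $\tau(i)$ whenever $v=t_i$ (the prescribed values are pairwise distinct and at most $c$ in number). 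Replacing each $v$ on $P_i$ by $v^{(\chi_v(i))}$ turns $P_i$ into a genuine path $P_i'$ in $G'$ --- consecutive copies are adjacent by construction --- of the same length, hence a shortest path from $s_i^{(\sigma(i))}$ to $t_i^{(\tau(i))}$, and the $P_i'$ are pairwise vertex-disjoint since each $\chi_v$ is injective. So $G$ is a yes-instance of \dsp{k}{c} exactly when $G'$ is a yes-instance of $k$-DSP, and either solution converts to the other in $O(cn)$ time. Running $\mathcal A$ once on $G'$ then gives total time $f(n,k)+O(cn)$, and since the known undirected $k$-DSP algorithms run in time $n^{g(k)}$, for every fixed $k$ this is polynomial.

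The only step with real content is the copy-assignment $\chi_v$ in the converse direction: one must fix the copies used at the terminals up front and then argue that, at each vertex, the interior passes of the paths always fit into the copies not reserved by a terminal. Keeping the accounting straight --- a single vertex can simultaneously be the source of several pairs, the sink of others, and an interior vertex of still more paths, and the total of all these is exactly $|I_v|$, which must not exceed $c$ --- is where the congestion bound is actually used; distance preservation, the simplicity of the projected and lifted walks, and the degenerate cases are all routine.
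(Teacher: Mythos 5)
Your reduction is correct and follows the same high-level strategy as the paper (blow up each vertex into $c$ copies and hand the resulting $k$-DSP instance to $\mathcal A$), but the gadget is wired differently in a way that matters. The paper creates $c$ copies of every \emph{non-terminal} vertex and, for an edge $(u,v)$, adds only the parallel edges $(u^i,v^i)$; terminals are left un-copied and a preprocessing step (pendant terminals $s_i',t_i'$) guarantees no terminal is internal to another path. In that layered wiring a lifted path must stay in a single layer throughout, so the converse direction needs one global layer index per path such that paths sharing a vertex get different layers; the paper simply assigns path $P_i$ to layer $i$. Your wiring instead adds all $c^2$ edges $u^{(a)}v^{(b)})$, which lets the copy index be chosen \emph{independently at each vertex} via the injections $\chi_v\colon I_v\to[c]$ --- and that is exactly the freedom the per-vertex congestion bound $|I_v|\le c$ supports, with no global consistency condition to verify. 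This is the genuinely different (and more robust) part of your argument; your explicit handling of terminal copies via $\sigma,\tau$ replaces the paper's pendant-vertex preprocessing and is equally valid. The price is quantitative: your $G'$ has $c^2|E(G)|$ edges rather than $c|E(G)|$, so the stated overhead is $O(c^2n)$ rather than the $O(cn)$ in \Cref{cor:transfer}, and in both your version and the paper's the call to $\mathcal A$ really costs $f(|G'|,k)$ rather than $f(n,k)$; since the nontrivial case has $c<k$, neither affects the ``polynomial for fixed $k$'' conclusion. One small wording point: the infeasibility test should reject when the total number of pairs having $v$ as a source \emph{or} sink exceeds $c$ (not each count separately), since that is what you need for the prescribed values $\sigma(i),\tau(j)$ at $v$ to be pairwise distinct within $[c]$.
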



The next natural question is whether the running time of our algorithm for DAGs is tight? We answer this almost affirmatively, by proving the following theorem.

\begin{theorem}
	\label{thm:hardness}
	For any fixed $c$, the \dsp{k}{c} problem cannot be solved in time $f(k) n^{o(k / \log k)}$ time, unless ETH fails.
\end{theorem}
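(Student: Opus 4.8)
The goal is a conditional lower bound: under ETH, no algorithm solves \dsp{k}{c} in time $f(k)n^{o(k/\log k)}$ for any fixed $c$. The natural route is a reduction from a problem already known to have such a "slightly super-polynomial" ETH lower bound. The standard source is $k$-\textsc{Clique} (or its colored variant, \textsc{Multicolored Clique}): under ETH there is no $f(k)n^{o(k)}$ algorithm, and a reduction that blows the parameter up by a logarithmic factor yields exactly the $n^{o(k/\log k)}$ form. Alternatively one can reduce from \textsc{Grid Tiling} or from $k \times k$ \textsc{Permutation Clique}, which are tailored for these geometric/routing-style gadget constructions. I would reduce from \textsc{Multicolored Clique} (or \textsc{Grid Tiling}), aiming for an instance of \dsp{k'}{c} with $k' = O(k)$ terminal pairs (or $k' = O(k^2/\log k)$ if a log factor is unavoidable) on a DAG of size polynomial in $n$.

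First, I would observe that it suffices to prove hardness for some fixed small $c$ — say $c=1$, i.e. plain $k$-DSP on DAGs — since for larger fixed $c$ one can pad: take the $c=1$ instance and add $c-1$ "dummy" terminal pairs whose unique shortest paths are forced through a shared sink-free region, raising congestion everywhere relevant from $1$ to $c$ without changing feasibility, at the cost of only $O(c)$ extra paths. Actually, a cleaner statement is that we can always increase the congestion budget by duplicating the whole graph in parallel layers; I would work out which padding is cleanest. The core task is therefore to show $k$-DSP on DAGs has no $f(k)n^{o(k/\log k)}$ algorithm under ETH. Here the key structural feature we exploit is that in a DAG, being a \emph{shortest} path is equivalent to being a path that respects a prescribed distance-layering: if we assign each vertex a "level" equal to its distance from a global source, then a shortest $s_i$–$t_i$ path is precisely one that increases the level by exactly $1$ at each step. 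This gives us tight control: a path of the right length can only use "synchronized" vertices, which is exactly the rigidity a gadget reduction needs.

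The main construction: build a DAG organized into $k$ long "selection paths" (one per color class of the clique instance), each of which must pass through exactly one of $n$ parallel "vertex gadgets" at each of $\binom{k}{2}$ "checkpoint" stages — the length constraint forces a choice of one vertex per color. For each pair of color classes $\{a,b\}$ one installs a "consistency gadget" where the $a$-path and the $b$-path are routed through a shared region; disjointness (congestion $1$) there forces the two chosen vertices to be compatible, i.e. to form an edge of the clique instance. The delicate part is making all of this acyclic while keeping all the relevant paths \emph{shortest} — so every forced detour must be length-balanced, and one typically pads alternative branches with chains of subdivision vertices so that all legal routings have identical length. I expect the main obstacle to be exactly this simultaneous enforcement of (i) acyclicity, (ii) the shortest-path constraint, and (iii) correct gadget behavior, because the shortest-path requirement removes the usual freedom to route paths along arbitrarily long detours — every gadget must be carefully length-calibrated, and the "crossing" gadgets that enforce edge-compatibility are the trickiest to balance. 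A secondary obstacle is bookkeeping the parameter blow-up: if the reduction produces $\Theta(k^2)$ terminal pairs from a $k$-clique instance, one gets only an $n^{o(\sqrt{k})}$-style bound, so to reach $n^{o(k/\log k)}$ one should instead reduce from a problem like \textsc{Grid Tiling} or use the known $\log$-factor trick (grouping, or reducing from \textsc{Subgraph Isomorphism} with bounded-degree pattern) so that the number of terminal pairs stays within an $O(\log k)$ factor of the source parameter; I would set up the reduction with that target in mind from the start and cite the corresponding ETH lower bound for the source problem (e.g. \cite{congestiongeneral}-style arguments, or the Chen et al.\ clique bound) to close the argument.
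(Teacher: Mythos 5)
Your plan identifies the right kind of source problem and correctly diagnoses the parameter-accounting trap: a reduction from \textsc{Multicolored Clique} with one consistency gadget per pair of colour classes yields $\Theta(k^2)$ terminal pairs and hence only an $n^{o(\sqrt{k})}$ bound, which is exactly why the paper avoids the Slivkins-style route \cite{Slivkins10} and instead goes through Marx's lower bound for partitioned subgraph isomorphism with a $3$-regular bipartite pattern $H$, parameterized by $|E(H)|$ \cite{marx}. But there is a genuine gap: the entire technical content of the theorem lies in producing and verifying length-calibrated gadgets on a DAG, and your proposal explicitly defers precisely that step (``I expect the main obstacle to be exactly this simultaneous enforcement\dots''). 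The paper does not build a new construction at all; it reuses the \disp{k}{c} instance of \cite{amirimfcs} verbatim and proves, demand type by demand type, that any solution can be converted into one consisting only of shortest paths: the $c-1$ blocking paths per block are unique and hence trivially shortest; a non-shortest path for a demand $(\overline{q}_{i,0},\underline{q}_{i,n_i})$ can be rerouted through the direct window edge without increasing any vertex's congestion; and a congestion-counting argument (using that the blocking demands pin every spine vertex at congestion at least $c-1$) forces every $(s_\ell,t_\ell)$-path to have length exactly $5$, which equals the distance. None of this verification appears in your plan, and it is the part that actually requires an argument.

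Your reduction of general fixed $c$ to $c=1$ by ``padding'' is likewise left unresolved (``I would work out which padding is cleanest''), and the clean resolution is exactly the mechanism already present in the construction: $c-1$ copies of each blocking demand saturate $c-1$ units of congestion on every relevant vertex, leaving a residual budget of one path so that congestion $c$ behaves like congestion $1$ where it matters. If you build your own reduction from scratch you must re-derive both this mechanism and all the length balancing; starting from the existing congestion-aware DAG construction and only arguing about shortest-path-ness, as the paper does, is both shorter and far less error-prone.
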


To prove the above theorem we show that the already existing hardness proof for \disp{k}{c} extends to \dsp{k}{c}. To achieve this, we show either every path in the existing hardness construction is a shortest path or it can be converted to a shortest path.

\medskip
In the next step, we consider the problem on more restricted graph families, in particular acyclic planar graphs. However, in this case, we consider the $k$-Edge Disjoint Shortest Paths ($k$-EDSP) variant of the problem and only for congestion $1$. We prove the following theorem.

\begin{theorem}\label{thm:sdpplanar}
Unless ETH fails, the $k$-EDSP problem does not admit any $f(k)n^{o(k)}$ algorithm even on planar DAGs.
\end{theorem}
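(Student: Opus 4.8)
The plan is to reduce from \textsc{Grid Tiling}. Recall that an instance consists of $k^2$ nonempty sets $S_{i,j}\subseteq[n]\times[n]$ ($i,j\in[k]$), and a solution picks $(a_{i,j},b_{i,j})\in S_{i,j}$ with $a_{i,j}=a_{i,j+1}$ for every row (agreement on the first coordinate) and $b_{i,j}=b_{i+1,j}$ for every column (agreement on the second coordinate). It is known, via the standard reduction from $k$-\textsc{Clique}, that \textsc{Grid Tiling} has no $f(k)n^{o(k)}$ algorithm unless ETH fails. So it suffices to build, in time $(nk)^{O(1)}$, an equivalent instance of $k'$-EDSP on a planar DAG with only $k'=O(k)$ terminal pairs: an $f(k')n^{o(k')}$ algorithm for $k'$-EDSP would then solve \textsc{Grid Tiling} in $(nk)^{o(k)}=f(k)\,n^{o(k)}$ time. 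The restriction to $O(k)$ terminal pairs (rather than, say, one per cell) is essential here, since a reduction to $k^2$ pairs would only rule out $f(k')n^{o(\sqrt{k'})}$ algorithms.

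I would lay out the construction as a diamond-shaped $k\times k$ array of gadget regions $R_{i,j}$, the array tilted by $45^\circ$ so that all edges can be oriented "downward", making the graph acyclic; I then route one \emph{horizontal} path $H_i$ through $R_{i,1},\dots,R_{i,k}$ and one \emph{vertical} path $V_j$ through $R_{1,j},\dots,R_{k,j}$, giving $k'=2k$ terminal pairs. Between consecutive gadgets of a row, $H_i$ leaves $R_{i,j}$ through one of $n$ ports and must re-enter $R_{i,j+1}$ through the matching port, and the internal wiring of $R_{i,j+1}$ lets a path entering at port $a$ reach only internal configurations whose first coordinate equals $a$; hence $H_i$ carries a first coordinate chosen once (when it enters $R_{i,1}$ from its source) that stays constant along the row, and symmetrically $V_j$ carries a second coordinate constant along column $j$. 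Edge lengths are padded — by subdivision, or simply by keeping the whole graph layered — so that every $s$–$t$ path a terminal pair could legally take has the same length; then "shortest path" becomes synonymous with "legal routing", and the EDSP instance is a yes-instance iff the \textsc{Grid Tiling} instance is one (completeness: route the intended lanes, which are legal and of minimum length; soundness: a family of edge-disjoint shortest paths is in particular a legal routing, so by the propagation wiring each $H_i$, $V_j$ carries a well-defined coordinate, and edge-disjointness inside each $R_{i,j}$ forces the chosen pair into $S_{i,j}$).

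The heart of the construction, and the step I expect to be the main obstacle, is the gadget $R_{i,j}$, which must simultaneously (i) be planar and acyclic, (ii) have the port/propagation behaviour above, (iii) force $H_i$ (carrying first coordinate $a$) and $V_j$ (carrying second coordinate $b$) to contend for a common directed edge if and only if $(a,b)\notin S_{i,j}$, and (iv) admit only equal-length legal routings, so the shortest-path restriction neither forbids a valid routing nor permits an invalid one. Because we work with \emph{edge}-disjointness and congestion $1$, $H_i$ and $V_j$ may share vertices freely, so a planar crossing of the two lanes is harmless by itself; the gadget must therefore be engineered so that at each crossing of $H_i$'s lane $a$ with $V_j$'s lane $b$ there is, when $(a,b)\notin S_{i,j}$, a single shared edge that both lanes are forced through (each lane being routed as an exitless corridor through its crossings), and, when $(a,b)\in S_{i,j}$, a length-preserving bypass along which the two lanes meet only in a vertex; a short check over pairs of lane indices then gives property (iii). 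The delicate points are to insert the bypasses without introducing a shortcut that breaks length-uniformity or a planarity violation, and to rule out accidental edge collisions between lanes away from their designated crossing. An alternative mirroring the proof of \Cref{thm:hardness} would be to start from a W[1]-hardness construction for $k$-edge-disjoint paths on planar DAGs and show its edge lengths can be padded to be uniform, so that every solution path is automatically shortest; the remaining work is then essentially the same length-uniformity argument.
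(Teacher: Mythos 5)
Your high-level architecture is the right one and in fact closely parallels the paper's: $2k$ terminal pairs routed as $k$ horizontal and $k$ vertical paths through a planar directed grid, with ``conflict'' encoded as a forced shared edge at a crossing and with all legal routings padded to equal length so that shortest-path and edge-disjointness together do the work. But there is a genuine gap exactly where you flag it: the crossing gadget $R_{i,j}$ is never constructed, and the four properties you require of it (planarity, acyclicity, port propagation of an $n$-valued coordinate, and a forced shared edge precisely when $(a,b)\notin S_{i,j}$, all while keeping every legal routing the same length) are the entire technical content of the theorem. As written, the proof asserts that such a gadget exists rather than exhibiting one, and your own list of ``delicate points'' (bypasses that do not create shortcuts, accidental collisions between lanes away from their designated crossing) is precisely the list of things that could fail.

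The paper shows that the gadget can be made essentially trivial, at the cost of starting from \textsc{Multi-Colored Clique} rather than \textsc{Grid Tiling} (both have the same $f(k)n^{o(k)}$ lower bound, so nothing is lost quantitatively). There, each of the $2k$ paths makes a single global ``lane'' choice --- the $\ell$-th horizontal path enters one of the rows $i$ with $c(v_i)=\ell$ of an $n\times n$ directed grid --- so no port/propagation machinery inside $k^2$ separate gadgets is needed. Every grid edge is subdivided, and at the crossing of row $i$ with column $j$ the two subdivision vertices are merged into one (forcing the horizontal and vertical paths through the common edge into $w^{\text{out}}_{i,j}$) unless $v_i$ and $v_j$ are adjacent and differently colored. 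Merging does not change any row or column length, so length-uniformity is automatic rather than something to be engineered, and the binary nature of the constraint (adjacent or not) removes the need for bypasses entirely. If you want to keep \textsc{Grid Tiling} as the source problem, the same flattening works --- replace per-gadget coordinate propagation by a single choice among $n$ parallel lanes per path in an $nk\times nk$ grid and decide each of the $O(n^2k^2)$ merges from the sets $S_{i,j}$ --- but some such concrete realization must be supplied before the argument is a proof.
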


This matches the recently improved lower bound for disjoint paths on planar DAGs by Chitnis~\cite{chitnis} and suggests that maybe the two problems are in a similar hardness category. On the other hand, the complexity of vertex disjoint shortest paths on planar DAGs remains open; even it is not clear if the problem is NP-hard, while it is known that the planar vertex disjoint paths on planar DAGs (even on upward planar graphs) is NP-hard~\cite{csr14}.

		\subsection*{Prerequisites and General Definitions}
	In this work, we merely require a basic familiarity with graph theory, and parameterized complexity, for both concepts we refer the reader to the book~\cite{parametrized}. 
	
	An instance $I$ of the problem is composed of the input graph $G=(V,E)$ and an integer $c$ as congestion and a set of $k$ source terminal pairs $\{(s_1,t_1),\ldots,(s_k,t_k)\}\subseteq V\times V$. When it is clear from context, we may only refer to source terminal pairs of $I$. In addition, we may further abbreviate it and write \emph{terminal pairs}. Every edge may have a positive integer as its weight (or $1$ by default); $n$ denotes the number of vertices of $G$.

	\section{Algorithm}
	First, let us recall the approach of~\cite{amiriipl} for disjoint paths with congestion, then we introduce our intuitive analysis for disjoint shortest paths with congestion.
	The general idea of~\cite{amiriipl} was inspired by kernelization techniques: reduce the instance to smaller instances with congestion $1$. In particular, sub-instances with at most $O(d)$ pairs (recall that $d=k-c$). Since we know how to solve the $d$-Vertex Disjoint Paths problem in time $n^{O(d)}$ we can solve the \disp{k}{c} in almost the same running time.
	
	The main challenge in the previous work was the reduction to a smaller instance. We provide such a reduction in the following lemma, which is analogous to Lemma 8 of~\cite{amiriipl}. Our argument is simpler and intuitive, in addition, directly applies to both problems.	In the following everything goes the same way even if we have edge weights since every existing path in the following is a shortest path.	
	
	\begin{lemma}\label{lem:reduction}Let $k>3d$, then there is a solution $S$ to an instance $I$ of \dsp{k}{c} in a DAG $G$ if and only if the following two conditions hold: 
		
		\begin{itemize}[topsep=0pt]
			\itemsep-0.3em 
			\item there is a path in $G$ between each source terminal pair of $I$,
		\item there is a set $\mathcal{C}$ of $3d$ terminal pairs of $I$ such that \dsp{3d}{2d} has a solution for the terminal pairs of $\mathcal{C}$.
				\end{itemize}
	\end{lemma}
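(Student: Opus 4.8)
The plan is to prove the two directions separately, where the forward direction (a solution $S$ exists $\Rightarrow$ both conditions hold) is essentially immediate: if $S = \{P_1,\dots,P_k\}$ is a valid solution, then each $P_i$ witnesses a path between $(s_i,t_i)$, giving the first bullet; and any subcollection of $S$ of size $3d$ is a solution to \dsp{3d}{2d} on those pairs, since restricting to $3d$ of the paths can only decrease congestion, and $2d = 3d - d$... wait, we actually need congestion at most $2d$ on the subinstance, which holds because the global solution already had congestion at most $c = k-d$, and in fact we can say more: on any $3d$ pairs the induced congestion is at most $\min(3d, c)$; since $k > 3d$ means $c = k-d > 2d$, so $\min(3d,c)$ could be as large as $3d$. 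So the naive restriction only gives congestion $3d$, not $2d$. Hence even the "easy" direction needs the rerouting idea — this is the first thing I would flag.

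So the real content, in \emph{both} directions, is a \emph{rerouting lemma}: given a solution with congestion $c$ to an instance with $k$ pairs, we want to select a "core" $\mathcal C$ of $3d$ pairs that can be routed with congestion only $2d$, and conversely, given a routing of such a core plus mere connectivity for the rest, to build a full congestion-$c$ solution. For the hard (reverse) direction I would take the $3d$ paths routing $\mathcal C$ with congestion $2d$, and then greedily add paths for the remaining $k - 3d$ pairs: for each such pair $(s_i,t_i)$, start from an arbitrary path guaranteed by the first bullet, and iteratively reroute it — following the paper's stated philosophy of improving \emph{one path at a time} using a topological order only on the rerouting points of that path — so that after inserting it the congestion stays within $c$. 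The accounting should be: the $3d$ core paths contribute at most $2d$ to any vertex, and each of the $\le k - 3d$ added paths can be forced, by the single-path rerouting argument, not to pile onto any vertex that is already "too full", so the total stays at $2d + (k-3d)$; since $c = k - d$ and $k > 3d$, one checks $2d + (k - 3d) = k - d = c$, which is exactly the budget. This clean arithmetic identity ($2d + (k-3d) = k-d$) is presumably the reason the constants $3d$ and $2d$ appear, and confirming that the single-path rerouting step can be carried out within exactly this slack is where the argument must be made to work.

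The main obstacle I expect is proving that single-path rerouting step rigorously: namely, that if we have a partial solution using pairs $\mathcal C \cup \{$some added pairs$\}$ with congestion $< c$ at the "critical" vertices and we want to add one more path between $s_i$ and $t_i$, we can always reroute it to avoid overloading any vertex. In a DAG one would argue along the lines of: take any $s_i$–$t_i$ path $Q$; if $Q$ passes through a vertex $v$ whose congestion would exceed the allowance, then among the existing paths through $v$ there must be one, say $P_j$, that we can "swap" a segment with — cutting $Q$ and $P_j$ at a common vertex and reconnecting — reducing the load at $v$ without creating a new overloaded vertex, using acyclicity to ensure the swap produces genuine paths and the process terminates. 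Making "critical vertices" and "rerouting points" precise, choosing the topological order on just those points, and verifying termination is the delicate part; I would structure it as an inner induction on the number of overloaded vertices of $Q$, always fixing the topologically-first one. The forward direction then reuses the same rerouting: start from $S$ itself, and show one can reroute so that some set of $k - 3d$ pairs becomes "droppable" down to congestion $2d$ on the remaining $3d$ — dually the same swap argument, peeling paths off the most-congested vertices one at a time.
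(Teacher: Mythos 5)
You correctly identify that the non-trivial direction is extracting the core $\mathcal{C}$ from a given solution $S$ (restricting $S$ to $3d$ of its paths only bounds the induced congestion by $\min(3d,c)$, not $2d$), and your arithmetic $2d+(k-3d)=k-d=c$ is exactly the identity the constants are built around. However, you then spend the detailed part of your argument on the wrong direction. The ``conditions imply solution'' direction needs no rerouting at all: route each of the $k-3d$ remaining pairs by an arbitrary shortest path (which exists by the first bullet); since a path visits each vertex at most once, every vertex receives at most $2d+(k-3d)=c$ paths even in the worst case where all of them pile onto the same vertex. Your single-path insertion-with-rerouting machinery there, including the inner induction on overloaded vertices, solves a non-problem.

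The genuine gap is in the direction you leave as a sketch. The paper's proof peels off one pair at a time: it shows $S$ can be rerouted, preserving every vertex's congestion and every path's length, so that a \emph{single} path contains \emph{all} vertices of congestion $c$; dropping that path and its pair reduces both $k$ and $c$ by one (so $d$ is unchanged), and iterating until $k=3d$ lands exactly at a $3d$-pair instance with congestion $2d$. The step your proposal is missing is the reason such a path exists: a pigeonhole observation using $k>3d$ and $c=k-d$, namely that any two congestion-$c$ vertices each miss only $d$ of the $k$ paths and hence share at least $d+1$ of them, and a third congestion-$c$ vertex must route one of those $d+1$ --- so every three congestion-$c$ vertices lie on a common path. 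This is what guarantees that whenever a path $P_a$ through the topologically first and last congestion-$c$ vertices misses some $a_b$, there is a partner path through $a_i$, $a_b$, $a_j$ (where $a_i,a_j$ are the congestion-$c$ vertices of $P_a$ tightest around $a_b$), and swapping the $a_i$-to-$a_j$ segments strictly enlarges the set of congestion-$c$ vertices on $P_a$ without changing any congestion or any length. Without this counting argument your ``there must be some $P_j$ we can swap with'' is unsupported; you also never address why the swapped paths remain shortest paths, a requirement specific to this problem, which the paper handles by noting that both exchanged segments are shortest $a_i$-to-$a_j$ subpaths and therefore have equal length.
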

	\begin{proof}
		We assume the first condition is fulfilled, otherwise we are done.
		
		If a solution to $\mathcal{C}$ with congestion $2d$ is given then we can route the rest of $k-3d$ paths arbitrarily by their shortest paths and the congestion does not exceed $k-3d + 2d = c$. 
		
		We may assume $S$ has a vertex of congestion $c$ otherwise drop a pair from $I$ and the corresponding path from $S$ to work on a smaller instance (update $c=c-1,k=k-1$). As long as $k> 3d$ and $S$ has no vertex of congestion $c$ we repeat the above process.  
		The following claim is our main contribution to show the reverse implication of lemma.

	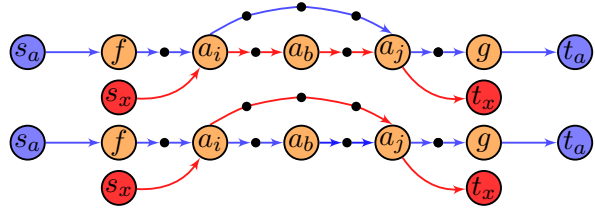
\begin{figure}
	\centering
	\begin{tikzpicture}[scale=1.2, every node/.style={fill=black, draw=black,text=black,circle,inner sep=0pt, minimum size=0.1cm},every path/.style={line width=0.25mm,draw=black}]

	\node[minimum size=0.45cm, fill=orange!60] (f) at (1,0) {$f$};
	\node[minimum size=0.45cm, fill=orange!60] (ai) at (2,0) {$a_i$};
	\node[minimum size=0.45cm, fill=orange!60] (b) at (3,0) {$a_b$};
	\node[minimum size=0.45cm, fill=orange!60] (aj) at (4,0) {$a_j$};
	\node[minimum size=0.45cm, fill=orange!60] (g) at (5,0) {$g$};
	
	\node (a1) at (2.4,0.4) {};
	\node (a3) at (3.6,0.4) {};
	\node (a2) at (3,0.5) {};
	\node[minimum size=0.45cm, fill=red!80] (s1) at (1,-0.5) {$s_x$};
	\node (s2) at (1.5,0) {};
	\node (s3) at (4.5,0) {};
	\node[minimum size=0.45cm, fill=red!80] (s4) at (5,-0.5) {$t_x$};
	\node[minimum size=0.45cm, fill=blue!50] (s5) at (0,0) {$s_a$};
	\node[minimum size=0.45cm, fill=blue!50] (s6) at (6,0) {$t_a$};
	\node (s7) at (2.5,0) {};
	\node (s8) at (4.5,0) {};
	\node (s9) at (3.5,0) {};

	\node[minimum size=0.45cm, fill=orange!60] (f1) at (1,-1) {$f$};
	\node[minimum size=0.45cm, fill=orange!60] (ai1) at (2,-1) {$a_i$};
	\node[minimum size=0.45cm, fill=orange!60] (b1) at (3,-1) {$a_b$};
	\node[minimum size=0.45cm, fill=orange!60] (aj1) at (4,-1) {$a_j$};
	\node[minimum size=0.45cm, fill=orange!60] (g1) at (5,-1) {$g$};
	
	\node (a11) at (2.4,-0.6) {};
	\node (a31) at (3.6,-0.6) {};
	\node (a21) at (3,-0.5) {};
	\node[minimum size=0.45cm, fill=red!80] (s11) at (1,-1.5) {$s_x$};
	\node (s21) at (1.5,-1) {};
	\node (s31) at (4.5,-1) {};
	\node[minimum size=0.45cm, fill=red!80] (s41) at (5,-1.5) {$t_x$};
	\node[minimum size=0.45cm, fill=blue!50] (s51) at (0,-1) {$s_a$};
	\node[minimum size=0.45cm, fill=blue!50] (s61) at (6,-1) {$t_a$};
	\node (s71) at (2.5,-1) {};
	\node (s81) at (4.5,-1) {};
	\node (s91) at (3.5,-1) {};

	\tikzset{edge/.style = {->,> = latex'}}
	
	\begin{pgfonlayer}{bg}
	\draw[edge,color=blue!70] (s5) to (f);	
	\draw[edge,color=blue!70] (f) to (s2);
	\draw[edge,color=blue!70] (s2) to (ai);
	\draw[edge,color=red!90] (ai) to (s7);
	\draw[edge,color=red!90] (s7) to (b);
	\draw[edge,color=red!90] (b) to (s9);
	\draw[edge,color=red!90] (s9) to (aj);
	\draw[edge,color=blue!70] (aj) to (s8);
	\draw[edge,color=blue!70] (s8) to (g);
	\draw[edge,color=blue!70] (g) to (s6);
	\draw[edge,color=blue!70] plot [smooth] coordinates {(ai.north) (a1) (a2) (a3) (aj.north)};
	%
	\draw[edge,color=red!90] (s1) to [bend right] (ai);
	\draw[edge,color=red!90] (aj) to [bend right] (s4);

	\draw[edge,color=blue!70] (s51) to (f1);	
	\draw[edge,color=blue!70] (f1) to (s21);
	\draw[edge,color=blue!70] (s21) to (ai1);
	\draw[edge,color=blue!70] (ai1) to (s71);
	\draw[edge,color=blue!70] (s71) to (b1);
	\draw[edge,color=blue!90] (b1) to (s91);
	\draw[edge,color=blue!90] (s91) to (aj1);
	\draw[edge,color=blue!70] (aj1) to (s81);
	\draw[edge,color=blue!70] (s81) to (g1);
	\draw[edge,color=blue!70] (g1) to (s61);
	\draw[edge,color=red!90] plot [smooth] coordinates {(ai1.north) (a11) (a21) (a31) (aj1.north)};
	
	\draw[edge,color=red!90] (s11) to [bend right] (ai1);
	\draw[edge,color=red!90] (aj1) to [bend right] (s41);
	\end{pgfonlayer}
	
	\end{tikzpicture}
	\caption{Swapping operation: $P_a$ is in blue and $P_x$ is in red. High congested vertices are colored orange (only $2$ paths are depicted). The top pair, shows the initial state of the two paths, the bottom pair shows them after swapping the middle subpaths. Since $a_i,a_j$ were the closest possible high congestion vertices of $P_a$ to/from $a_b$, the updated $P_a$ ($P^1_a$) has more high congestion vertices than the old one.}
	\label{fig:reroute}
\end{figure}
		\medskip
		\noindent\textbf{Claim.}
			If $S$ has a vertex of congestion $c$ then $I$ has a solution $S'$ such that:
			\begin{itemize}
				\itemsep-0.3em 
				\item congestion of every vertex is the same in $S$ and $S'$ and,
				\item there is a path $P\in S'$ that contains all vertices of congestion $c$.
			\end{itemize}   
		\medskip
		
		Let us first show how the lemma follows from the claim. 
		If the claim is correct then 
		 remove the endpoints of $P$ from $I$ to obtain $I'$. Then $S''=S'-P$ is a solution for $I'$. Afterwards update $k=k-1, c=c-1, S=S'', I=I'$ and as long as $k>3d$ repeat the same process. Once we have an instance with $3d$ pairs, the lemma follows.
		
		\medskip
		
		\noindent\textbf{Proof of Claim. }Order vertices of congestion $c$ in $S$ w.r.t.\ their topological order in $G$ to obtain a strictly increasing ordered set of them $\mathcal{A}=(a_1,\ldots,a_\ell)$. If there is a path $P_i\in S$ that visits all vertices of $\mathcal{A}$ we are done; otherwise, we need the following observation $\bigodot$ from the proof of Lemma 8 in~\cite{amiriipl}:
		
		\noindent\textbf{Observation $\bigodot$ }\label{starstar} For every (at most) $3$ vertices $a_i,a_j,a_h \in \mathcal{A}$ there is a path of $S$ that contains all of them: since $k>3d$, $a_i,a_j$ share at least $d+1$ common paths and since congestion of $a_h$ is $k-d$ it routes at least one of these $d+1$ paths.
		\medskip
		
		Note that if $\ell \le 3$ then there is a path that contains all vertices of $\mathcal{A}$ and we are done, hence in the rest we assume $\ell>3$. 
		There is a path $P_{a}\in S$ ($P_a$ connects the $a$'th terminal pair) that contains $a_1,a_\ell$ but not a vertex $a_b\in \mathcal{A}$. We reroute paths of $S$ so that the updated $P_a$, we call it $P^1_a$, contains $a_b$.
		
		Choose $a_i,a_j \in \mathcal{A}\cap P_a$ such that $a_i<a_b<a_j$ (w.r.t.\ their order in $\mathcal{A}$) and additionally they are closest such high congestion pair of vertices in $P_a$ surrendering $a_b$; i.e., for every other pair of vertices $a_f,a_h \in \mathcal{A}\cap P_a$ if $a_f<a_b<a_h$ then $a_f\le a_i$ and $a_j\le a_h$.
		
		By $\bigodot$ there is a path $P_x\in S$ such that $a_i,a_b,a_j \in V(P_x)$.
		 Replace a subpath of $P_x$ that starts and ends at $a_i,a_j$ with a subpath of $P_a$ that starts and ends at $a_i,a_j$ resp., to obtain paths $P'_x,P^1_a$. Let define $S^1=(S-\{P_a,P_x\})\cup \{P^1_a,P'_x\}$. 
		By choice of $a_i,a_j$, we have $\mathcal{A}\cap V(P_a) \subseteq V(P^1_a)$. Additionally, $P^1_a$ contains at least a new high congestion vertex $a_b\notin V(P_a)$. The congestion of all vertices and the path lengths in $S_1$ are the same as in $S$. Thus, $S^1$ is a solution to $I$. See~\Cref{fig:reroute} for an illustration of rerouting.
		
		We repeat the same process to construct paths $P^2_a,P^3_a,\ldots$; since each $P^i_a$ has more high congestion vertices than $P^{i-1}_a$, eventually we construct a path $P^t_a$ (for some $t\le \ell -2$) in a solution $S'=S^t$ that contains all vertices of congestion $c$ as claimed.\end{proof}	
	
	To find a feasible routing for a small set of terminal pairs in~\cite{amiriipl}, the algorithm of Fortune et al.~\cite{FORTUNE1980111} was employed. Kobayashi and Sako~\cite{shortest2} extended the construction of Fortune et al., to the shortest disjoint paths on DAGs. In those constructions, they use an auxiliary DAG structure that makes the algorithms none intuitive. Here we provide a much simpler algorithm that basically uses an elementary divide and conquer algorithm.   
	
	\newcommand{\head}[1]{\operatorname{Head}(#1)}
	
	\newcommand{\tail}[1]{\operatorname{Tail}(#1)}
	
	For a directed edge $e=(u,v)$ ($u\rightarrow v$), its tail is $u$ and $v$ is its head; we write $u=\tail{e},v=\head{e}$.
	
	\begin{lemma}\label{lem:shortestdisjointpathsdag}The $k$-DSP problem on DAGs can be solved in time $n^{O(k)}$.
	\end{lemma}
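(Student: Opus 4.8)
The plan is to solve $k$-DSP on a DAG $G$ by a divide-and-conquer recursion on the (topologically ordered) "frontier" of the $k$ paths, mirroring the Fortune–Hopcroft–Wyllie idea but without building an explicit auxiliary product graph. First I would preprocess $G$ to make shortest paths easy to handle: compute, for every vertex $v$ and every terminal pair $i$, the value $d_i(v)$ = distance from $s_i$ to $v$ plus distance from $v$ to $t_i$, and keep only vertices with $d_i(v) = \mathrm{dist}(s_i,t_i)$ as the "legal" vertices for path $P_i$. A path $P_i$ from $s_i$ to $t_i$ is a shortest path if and only if every vertex (equivalently every edge) it uses is legal for $i$ and each step decreases the remaining distance to $t_i$ by exactly the edge weight; this reduces the shortest-path requirement to an ordinary reachability requirement inside each legal subDAG. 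This step is routine and costs $n^{O(1)}$.

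**The recursion** works on a \emph{state} consisting of $k$ vertices $(v_1,\dots,v_k)$, where $v_i$ is the current endpoint of a partial shortest path from $s_i$, together with the guarantee that the $k$ partial paths built so far are pairwise vertex-disjoint. The key structural fact, which is what makes this work on DAGs and which I would prove as the main lemma, is a \emph{balanced-separator} statement: in any feasible disjoint routing $P_1,\dots,P_k$, pick a topological "level" $\lambda$ and look at where each $P_i$ crosses it; the multiset of crossing vertices has size $k$, and one can always choose $\lambda$ so that each of the two halves (the part of every $P_i$ before level $\lambda$, and the part after) contains at most, say, a constant fraction of the total number of edges — or, more simply, so that the number of paths that are "still alive" on each side is controlled. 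I would then guess the crossing tuple $(w_1,\dots,w_k)$ at that level (there are at most $n^k$ choices), check that $w_1,\dots,w_k$ are distinct and legal, and recurse independently on the left DAG with terminals $(s_i,w_i)$ and the right DAG with terminals $(w_i,t_i)$. Because the recursion depth is $O(\log n)$ and at each node we branch over $n^{O(k)}$ guesses while the two subproblems are on disjoint vertex sets, the total running time solves to $n^{O(k)}$.

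**The main obstacle** is making the "divide" step actually split the problem so that the two recursive instances are genuinely smaller and the disjointness constraint decomposes cleanly. The subtlety is that after cutting at level $\lambda$, the left halves of the $k$ paths must be disjoint \emph{among themselves} and the right halves disjoint among themselves, but a left half and a right half could a priori share a vertex on the "wrong" side of $\lambda$ — this is exactly why the naive product-graph search is the usual way out. On a DAG this cannot happen: every edge of $P_i$ on the left half has its head at topological position $\le \lambda$ and every edge of $P_j$ on the right half has its tail at position $\ge \lambda$, so the only possible shared vertices lie exactly at level $\lambda$, and those are the guessed $w_i$'s which we force to be distinct. Hence disjointness of the global routing is equivalent to disjointness within each side plus distinctness of the crossing tuple, and the decomposition is sound. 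I would spell out this acyclicity argument carefully, then bound the recursion: if $T(n)$ is the time on an $n$-vertex instance, then $T(n) \le n^{O(k)} \cdot \big(T(n_L) + T(n_R)\big)$ with $n_L + n_R \le n + k$ and a balanced choice of $\lambda$ forcing $n_L, n_R \le \tfrac{2}{3}n$ (for $n$ large relative to $k$), which unrolls to $T(n) = n^{O(k)}$; the base case $n = O(k)$ is solved by brute force.

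One remaining point to address is \emph{how} to guarantee a balanced $\lambda$ exists when the optimal paths are unknown — I handle this by observing that the algorithm need not commit to the separator of the true solution: it simply tries every level $\lambda \in \{1,\dots,n\}$ and every crossing tuple, and accepts if any choice leads to success on both sides; correctness follows because for the \emph{true} solution some (in fact, a balanced) level $\lambda$ with its induced crossing tuple is among those tried, and conversely any successful guess reassembles into a valid global solution by the acyclicity argument above. The running-time bound uses only that \emph{a} balanced level exists for the true solution, which is immediate by a counting argument on the total number of path-edges, at most $kn$.
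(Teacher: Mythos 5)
Your overall strategy --- topologically sort, cut the DAG in two, guess how each path crosses the cut, and recurse on the two halves --- is essentially the paper's. But your running-time analysis has a genuine gap. Your recurrence is $T(n) \le n^{O(k)}\bigl(T(n_L)+T(n_R)\bigr)$ with $n_L,n_R \le \tfrac{2}{3}n$: at each node you branch over $n^{O(k)}$ crossing tuples and, for each guess, solve a fresh left instance and a fresh right instance (the instances differ across guesses because the new terminals $w_i$ differ). Unrolling this over recursion depth $\Theta(\log n)$ gives $n^{O(k\log n)}$, which is quasi-polynomial, not $n^{O(k)}$. Disjointness of the two halves' vertex sets does not prevent this blow-up; what prevents it is memoization, and the paper's proof makes that the centerpiece: for each sub-DAG it computes, bottom-up, a dictionary $\mathcal{D}[H]$ indexed by \emph{all} $h$-tuples $H$ of candidate source--terminal pairs ($h\le k$), so each half is processed once for all $n^{O(k)}$ tuples rather than once per guess made by its parent. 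A merge step then costs $n^{O(k)}$ table lookups, the recursion tree has only $O(n)$ nodes because the vertex sets genuinely partition, and the total is $n^{O(k)}$. You need to add this tabulation over all terminal tuples for your bound to go through.

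A smaller but real imprecision: a path need not contain a vertex at the exact cut position, so ``the crossing tuple $(w_1,\dots,w_k)$ at level $\lambda$'' is not well defined, and since the $w_i$ occupy $k$ distinct topological positions your two sub-DAGs overlap in those vertices, which muddies the claim that a left half and a right half can only meet at the cut. The paper instead guesses the $k$ crossing \emph{edges} (tail in $V_1$, head in $V_2$); then the left portions live entirely in $V_1$, the right portions entirely in $V_2$, the two halves are genuinely vertex-disjoint, and global disjointness decomposes exactly as you intend. Your ``legal vertex'' preprocessing and the observation that each path crosses the cut exactly once (by acyclicity) are fine and match the paper in spirit; also, splitting at the median topological position already balances the recursion, so the balanced-separator discussion is unnecessary.
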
\begin{proof}
		
		We solve a more general version of $k$-DSP problem: 
		\noindent For every $h$ tuple $H$ of source and terminal pairs, $0<h\le k$, we solve $h$-DSP between them. If there is a solution $S$ for $H$, store it in a dictionary $\mathcal{D}$; i.e.\ $\mathcal{D}[H]=S$. otherwise set $\mathcal{D}[H]=\emptyset$.
		
		To solve the above problem,  first topological sort vertices of the graph as $v_1,\ldots,v_n$. Afterward, recursively solve the problem in subsets of vertices $V_1=\{v_1,\ldots,v_{\ceil{n/2}}\}$ and $V_2=\{v_{\ceil{n/2}+1},\ldots,v_n\}$ for every $h$-tuple in these subsets ($1\le h\le k$). 	
		Then any $h$-tuple $H$ for the union of two sets falls in one of the following two categories ($h\le k$): 
		\begin{itemize}
			\itemsep-0.3em 
			\item $H$ can be partitioned into two tuples $H_1,H_2$ where $H_1\subseteq V_1\times V_1$ and $H_2\subseteq V_2\times V_2$.
			\item There is a source terminal pair $(s_i,t_i)\in H$ such that $s_i\in V_1, t_i \in V_2$. 
		\end{itemize}
		For the first case if $H_i$ ($i\in \{1,2\}$) is not empty but there is no solution for it, i.e., $\mathcal{D}[H_i]=\emptyset$, then set $\mathcal{D}[H] = \emptyset$ otherwise set $\mathcal{D}[H]=\mathcal{D}[H_1]\cup\mathcal{D}[H_2]$.
		
		It remains to solve the second case. 
		Suppose $t\ge 1$ terminal pairs of $H$ have one end in $V_1$ and the other end in $V_2$, let us call the set of them $H'$. 
		
		If there is a solution $S$ for $H$ then there are $t$ paths in $S$ going along an ordered set of $t$ edges $\mathcal{E}=(e_1,\ldots,e_t)$ where $\tail{e_i}\in V_1,\head{e_i}\in V_2$. 
		We do not know the endpoints of these edges but we can guess them.
		
		 Afterwards create two new sub-instances $H_1,H_2$: $H_1$ contains all source and terminal pairs that are entirely in $V_1$, in addition it has $t$ additional source terminal pairs where their source vertices are the sources in $H'$ and the terminal vertices are the tail of corresponding edges (according to their order) in $\mathcal{E}$.
		Similarly $H_2$ has all source and terminal pairs of $H$ that are entirely in $V_2$ and a new set of $t$ source terminal pairs where their sources are the heads of edges in $\mathcal{E}$ and their terminals are the terminals of $H'$. For an illustration of $H_1,H_2$ and the edge set~$\mathcal{E}$ see~\Cref{fig:merg}.
		
		In time $O(k)$ we can check if $\mathcal{D}[H_1],\mathcal{D}[H_2]$ are non-empty and their concatenation with edges of $\mathcal{E}$ yields a valid solution $S$ for $H$. If it is so, we set $\mathcal{D}[H]=S$. If no such set of edges $\mathcal{E}$ provides a desired solution for $H$, we set $\mathcal{D}[H]=\emptyset$.
		
		Since we considered all possibilities of $h$-tuples of the terminals, at the end of the algorithm, for a given tuple of source terminal pairs, we pick its value from $\mathcal{D}$.
		
		The running time of each recursive step is dominated by the merge operation, which can be done in $O(kn^{3k})$: there are $O(n^k)$ $h$-tuples, for each of them we guess $O(n^{2k})$ ordered sets of at most $k$-edges, and  in $O(k)$ we compare each path length (summation of $3$ values) with the corresponding value of the shortest path matrix. Since the input set of vertices in each call is partitioned, the total number of merge steps is $O(n)$, hence the running time is as claimed.
	\end{proof}

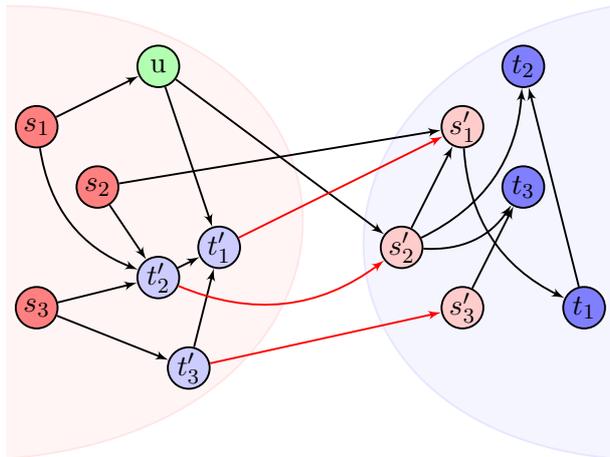
\begin{figure}
	\centering
	\begin{tikzpicture}[scale=0.8,every node/.style={draw=black,text=black,circle,fill=green!30,inner sep=0pt, minimum size=0.55cm},every path/.style={line width=0.25mm,draw=black}]

	\draw[color=red!10, fill=red!4] (-0.5,7) .. controls (6,7) and (6,0) ..(-0.5,-0.5);
	
	\draw[color=blue!10, fill=blue!4] (9.5,7) .. controls (4,6) and (4,0) .. (9.5,-0.5);

	\node[fill=red!50] (s1) at (0,5) {$s_1$};
	\node[fill=red!50] (s2) at (1,4) {$s_2$};
	\node[fill=red!50] (s3) at (0,2) {$s_3$};
	
	\node[fill=red!20] (s11) at (7,5) {$s'_1$};
	\node[fill=red!20] (s21) at (6,3) {$s'_2$};
	\node[fill=red!20] (s31) at (7,2) {$s'_3$};

	\node[fill=blue!20] (t11) at (3,3) {$t'_1$};
	\node[fill=blue!20] (t21) at (2,2.5) {$t'_2$};
	\node[fill=blue!20] (t31) at (2.5,1) {$t'_3$};

	\node[fill=blue!50] (t1) at (9,2) {$t_1$};
	\node[fill=blue!50] (t2) at (8,6) {$t_2$};
	\node[fill=blue!50] (t3) at (8,4) {$t_3$};
	
	\node (u) at (2,6) {u};

	\tikzset{edge/.style = {->,> = latex'}}
	\draw[edge] (s1) to (u);
	\draw[edge] (u) to (t11);
	\draw[edge] (s1) to[bend right] (t21);
	\draw[edge] (s2) to (t21);
	\draw[edge] (s3) to (t21);
	\draw[edge] (s3) to (t31);
	\draw[edge] (t31) to (t11);
	\draw[edge] (t21) to (t11);
	\draw[edge] (u) to (s21);		
	\draw[edge,color=red] (t11) to (s11);
	\draw[edge] (s11) to [bend right] (t1);
	\draw[edge,color=red] (t31) to (s31);
	\draw[edge,color=red] (t21) to [bend right] (s21);
	\draw[edge] (s31) to (t3);
	\draw[edge] (s21) to [bend right] (t2);
	\draw[edge] (s21) to [bend right] (t3);
	\draw[edge] (s2) to (s11);
	\draw[edge] (t1) to (t2);
	\draw[edge] (s21) to (s11);
	
	\end{tikzpicture}
	\caption{An illustration of the merge operation described in~\Cref{lem:shortestdisjointpathsdag}. The instance $H$ has $3$ source and terminal pairs. The set of $3$ (boundary) edges that are in the shortest paths connecting the sources to the terminals are colored red. The two intermediate instances $H_1,H_2$ are depicted in the left and right respectively.}
	\label{fig:merg}
\end{figure}
	
	The following is similar to Lemma 6 of~\cite{amiriipl}, except that we use~\Cref{lem:shortestdisjointpathsdag} instead of algorithm of Fortune et al. On the other hand since we prove the following lemma in its most general form later in~\Cref{cor:transfer}, for the moment we omit its proof. 
	
	\begin{lemma}\label{lem:congesteddsp}There is an algorithm that solves the \dsp{k}{c} in time $n^{O(k)}$ on DAGs.\end{lemma}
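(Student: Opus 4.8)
The plan is to \emph{trade congestion for disjointness}: turn an instance of \dsp{k}{c} on a DAG $G$ into an instance of $k$-DSP (congestion $1$) on a slightly larger DAG $G'$, and then apply \Cref{lem:shortestdisjointpathsdag}. First, dispose of the trivial case $c\ge k$: then every vertex may carry all $k$ paths, so the instance is a yes-instance iff each $t_i$ is reachable from $s_i$ in $G$, which is checkable in polynomial time. So from now on assume $c\le k-1$.

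\textbf{Construction of $G'$.} Replace every vertex $v$ of $G$ by $c$ \emph{copies} $v^1,\dots,v^c$; for every edge $(u,v)\in E(G)$ of weight $w$ and every $1\le a,b\le c$, put an edge $(u^a,v^b)$ of weight $w$ into $G'$, so the copies of $u$ and of $v$ are joined by a complete bipartite pattern. To cope with terminals that might coincide, add for each $i$ a private new source $\hat s_i$ with edges $\hat s_i\to s_i^a$ ($1\le a\le c$) of weight $1$, and symmetrically a private new sink $\hat t_i$; the new instance of $k$-DSP asks for vertex-disjoint shortest paths for the pairs $(\hat s_i,\hat t_i)$, $1\le i\le k$. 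Then $G'$ is a DAG -- list the copies of each vertex consecutively along a topological order of $G$ -- with $|V(G')|=cn+2k=O(n)$ for fixed $k$, built in time $O(c^2|E(G)|)$.

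\textbf{Correctness.} The key structural fact is that $G'$ creates no shortcuts: since $G'$ respects a topological order of $G$, every path of $G'$ meets at most one copy of each vertex and hence projects -- forgetting superscripts and the $\hat s_i,\hat t_i$ -- to a genuine path of $G$ of the same weight; conversely, \emph{any} path $v_0v_1\cdots v_m$ of $G$ lifts to a path $v_0^{\ell_0}\cdots v_m^{\ell_m}$ of $G'$ under \emph{any} choice of copy indices $\ell_j$, precisely because of the complete bipartite joins. Therefore $d_{G'}(u^a,v^b)=d_G(u,v)$ for all $u,v,a,b$, a path of $G'$ is shortest between its endpoints iff its projection is shortest in $G$, and (since an $\hat s_i$--$\hat t_i$ path uses exactly one gadget edge at each end) $d_{G'}(\hat s_i,\hat t_i)=d_G(s_i,t_i)+2$, realized exactly by lifts of shortest $s_i$--$t_i$ paths. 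Given a solution of \dsp{k}{c} in $G$, at each vertex $v$ label the at most $c$ solution paths through $v$ with distinct indices from $\{1,\dots,c\}$ and lift every path accordingly: the lifts are legal, pairwise vertex-disjoint (two of them sharing a copy $v^a$ would have to carry the same label at $v$), still shortest, use each $\hat s_i,\hat t_i$ once, and use each copy at most once -- a $k$-DSP solution of $G'$. Conversely, projecting a $k$-DSP solution of $G'$ back to $G$ yields $k$ shortest $s_i$--$t_i$ paths in which each original vertex, owning only $c$ copies, is used at most $c$ times -- a solution of \dsp{k}{c}.

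\textbf{Running time and the main obstacle.} Building $G'$ costs $O(c^2|E(G)|)=f(k)\,n^{O(1)}$; running \Cref{lem:shortestdisjointpathsdag} on $G'$ costs $|V(G')|^{O(k)}=n^{O(k)}$, which dominates, and reading off the answer in $G$ is immediate. I expect no serious difficulty here; the point needing a little care is insisting that $G'$ remain a positive-integer-weighted DAG (no zero-weight gadget edges), which is why we argue via the uniform ``$+2$'' offset rather than weight-$0$ links. The one conceptual thing to get right is to use the complete-bipartite copy gadget -- a naive ``in/out split with $c$ parallel lanes'' would leave the split vertices themselves over-congested -- after which everything is bookkeeping around the completeness property, which is exactly what makes the purely local per-vertex labelling glue into a globally consistent family of disjoint paths.
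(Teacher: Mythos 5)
Your proof is correct, and it follows the same high-level plan as the paper's: the paper omits a direct proof of this lemma and instead derives it from \Cref{cor:transfer}, which likewise blows each vertex up into $c$ copies so as to convert congestion $c$ into congestion $1$ and then invokes a $k$-DSP algorithm (\Cref{lem:shortestdisjointpathsdag} in the DAG case). The difference is in the copy gadget, and it is not cosmetic. The paper joins only same-index copies, i.e.\ it builds $c$ disjoint parallel layers sharing the terminals, and in the direction ``congested solution $\Rightarrow$ disjoint solution'' it routes $P_i$ entirely through the $i$'th layer; taken literally this needs $k$ layers rather than $c$, and with only $c$ disconnected layers one would have to properly colour the intersection graph of the $k$ paths with $c$ colours, which need not be possible even though every vertex lies on at most $c$ paths (three paths pairwise meeting at three distinct vertices already defeat $c=2$). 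Your complete-bipartite joins remove exactly this obstruction: because any per-vertex choice of copy indices lifts to a genuine path of $G'$, you can assign the labels locally at each vertex, where the at-most-$c$ bound is actually available, and the lifted paths are automatically vertex-disjoint. Your private sources and sinks $\hat{s}_i,\hat{t}_i$ play the same role as the paper's pendant terminals $s_i',t_i'$. So your argument buys a cleaner and, as written, more watertight reduction at no cost in running time; the rest --- projection preserving lengths, the uniform $+2$ offset, the $(cn+2k)^{O(k)}=n^{O(k)}$ bound --- matches the paper's reasoning.
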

	
	Now our main theorem is a consequence of previous lemmas.
	
	\begin{apthm}{thm:main}
		The \dsp{k}{c} problem in acyclic graphs can be solved in time $f(k)n^{O(d)}$.
	\end{apthm}
	
	\begin{proof}
		If $k\le 3d$ then we directly apply~\Cref{lem:congesteddsp}, otherwise, by~\Cref{lem:reduction} we only need to guess $3d$ source terminal pairs (there are $k \choose 3d$ such choices) then route them in time $n^{O(d)}$ using~\Cref{lem:congesteddsp}. Afterwards connect the remaining pairs via their shortest paths. If the algorithm fails in any phase, there is no solution to the given instance\end{proof}

One of the simple but powerful tools that we used is the~\Cref{lem:congesteddsp}, the lemma actually applies on general digraphs which helps to prove the following corollary. In the following we show how it generalizes to general digraphs.

\begin{apcor}{cor:transfer}
	Let $\mathcal{A}$ be an algorithm that solves an instance of $k$-DSP on (un)directed graphs in time $f(n,k)$, where $n$ is the size of the input graph. Then there is an algorithm $\mathcal{B}$ that solves an instance of \dsp{k}{c} on (un)directed graphs in time $f(n,k) + O(cn)$. In particular, for any fixed $k$, \dsp{k}{c} is solvable on undirected graphs in polynomial time.
\end{apcor}
\begin{proof}
	 
We may assume that no terminal serves as an internal vertex for other paths in any solution for a given instance $I$ of $k$-DSP. To justify this assumption, note that for every terminal pair $s_i,t_i$  we can add new terminal pairs $s'_i,t'_i$ and edges $(s'_i,s_i), (t'_i,t_i)$ to ensure other paths would not go through $s'_i,t'_i$ (new terminal pairs). Thus, from now on we assume no terminal pair routes any other path.
	
Let us suppose that an instance $I$ of \dsp{k}{c} is given, with the underlying graph being denoted by $G$. We construct a graph $G'$ from $G$ by first copying every non-terminal vertex $v$ for $c$ times as $v^1,\ldots,v^c$; then for every edge $(u,v)\in E(G)$ add edges $(u^i,v^i)$ to $G'$ (with the same weight as $(u,v)$). Our instance $I'$ of $k$-DSP is the graph $G'$ together with the original source terminal pairs. We prove that the \dsp{k}{c} for $I$ has a solution if and only if $k$-DSP in $G'$ with terminal pairs $\{(s_i,t_i):1\le i\le k\}$ has a solution. 

If there is a solution for the instance of $k$-DSP in $G'$, then by merging the $c$ copies of every vertex into its original form, the corresponding (possibly merged) paths will result in congestion at most $c$ on every vertex. On the other hand, since the paths were the shortest path and we did not change the length of the edges in the process, every path in the resulting solution is also a shortest path. 

For the other direction, suppose a solution to \dsp{k}{c} in $G$ is given as a set of paths $\mathcal{P}=\{P_1,\ldots,P_k\}$. Let suppose a path $P_i=s_i,v_1,\ldots,v_\ell,t_i$. Then in $G'$ for terminal pair $s_i,t_i$ we construct a path $P'_i=s_i,v^i_1,\ldots,v^i_\ell,t_i$. First of all observe that $P'_i$ is a shortest path connecting $s_i,t_i$ otherwise $P_i$ was not a shortest path in $G$. We claim for $i\neq j \colon P'_i\cap P'_j = \emptyset$. This is actually by definition, since terminals are distinct, and for path $P'_i$ we used $i$'th copy of $v$ i.e.\ vertices of form $v^i_x$ which are distinct from the $j$'th copies of $v$ that has been used in $P'_j$.

Now suppose the algorithm $\mathcal{A}$ as stated above is given. Let $I$ be an instance of \dsp{k}{c}. In time $O(c|G|)$ we convert $I$ to an instance $I'$ as explained above. Then we solve instance $I'$ by algorithm $\mathcal{A}$ and at the end convert its output to the output of $I$ in time $O(c|G|)$. In particular since for undirected graphs there are known algorithms with running time $n^{f(k)}$ for $k$-DSP, it follows that the congested version is solvable in time $n^{f(k)}$.
\end{proof}

\section{Hardness}

%
%

For the $k$-DSP with congestion, to provide a lower bound with dependency on $d$, one can either use the Slivkins~\cite{Slivkins10} construction or the more modern construction of~\cite{amirimfcs}\footnote{We cited the conference version here since it is public and this part is the same as the journal version.} and then argue either every path is already a shortest path or it can be converted to a shortest path (even though Slivkin did not write his paper on routing with congestion, it is possible to modify his construction to reflect the congestion). In this work, we use the latter, since it provides a better lower bound (Slivkins' construction can only get us to $n^{o(\sqrt{k})}$ lower bound, while the more recent result, gives us $n^{o(k/\log k)}$ lower bound). Let us first briefly recall the hardness construction of~\cite{amirimfcs}, then we prove our first hardness result. 

\medskip
\noindent\textbf{Hardness for constant $c$}

In~\cite{amirimfcs}, the authors show that the \disp{k}{c} problem is relatively hard by relating it to partitioned subgraph isomorphism. In the latter problem one is given two graphs $H$ and $G$ together with an enumeration $V(H) = \{u_1, \ldots, u_k\}$ and a partitioning $V(G) = V_1 \dot{\cup} \ldots \dot{\cup} V_k$. The task is to find a graph homomorphism $\varphi: H \to G$, such that $\varphi(u_i) \in V_i$. This problem cannot be solved in time $f(k) n^{o(k / \log k)}$, where $k = |E(H)|$ and $n = |V(G)|$, even if $H$ is 3-regular and bipartite, unless ETH fails \cite{marx}.

This hardness result for partitioned subgraph isomorphism problem carries over to the \disp{k}{c} problem via an appropriate reduction \cite{amirimfcs}. That is, assuming ETH, \dsp{k}{c} cannot be solved in $f(k) n^{o(k / \log k)}$ time for constant $c$. We show that their reduction still holds, even if all paths are required to be shortest paths.

	\begin{apthm}{thm:hardness}
	For any fixed $c$, the \dsp{k}{c} problem cannot be solved in time $f(k) n^{o(k / \log k)}$ time, unless ETH fails.
\end{apthm}

\pgfdeclarelayer{demands}
\pgfdeclarelayer{graph}
\pgfsetlayers{demands,main,graph}

\begin{figure}

	\centering
	
	\begin{tikzpicture}[label distance=-1.5mm,every node/.style={draw=black,text=black,circle,fill=black,inner sep=0pt, minimum size=0.15cm},every path/.style={line width=0.25mm,draw=black}]
		
		\tikzset{edge/.style = {->,>= latex'}}
		
		\begin{pgfonlayer}{graph}
		
			\node[label={[label distance=-0.2mm]90:\tiny $s_\ell$}] (sl) at (4.25,2.25) {};
			\node[label={[label distance=0.2mm]270:\tiny $t_\ell$}] (tl) at (4.25,-5.75) {};
			
			
			\node[label={[label distance=-0.3mm]90:\tiny $\overline{q}_{i,0}$}] (u1) at (-0.5,1) {};
			\node[label={[label distance=-1.5mm]90:\tiny $\overline{q}_{i,j-1}$}] (u3) at (2,1) {};
			\node[label={[label distance=-1.5mm]90:\tiny $\overline{q}_{i,j,\ell'}$}] (u4) at (3.5,1) {};
			\node[label={[label distance=1mm]10:\tiny $\overline{q}_{i,j,\ell}$}] (u5) at (5,1) {};
			\node (u6) at (6.5,1) {};
			\node (u10) at (9,1) {};
			
			\draw[edge, loosely dotted] (u1) to (u3);
			\draw[edge, loosely dotted] (u3) to (u4);
			\draw[edge, loosely dotted] (u4) to (u5);
			\draw[edge, loosely dotted] (u5) to (u6);
			\draw[edge, loosely dotted] (u6) to (u10);
		
			\node (v1) at (-0.5,-0.5) {};
			\node (v3) at (2,-0.5) {};
			\node (v4) at (3.5,-0.5) {};
			\node (v5) at (5,-0.5) {};
			\node[label={[label distance=-0.5mm]270:\tiny $\underline{q}_{i,j}$}] (v6) at (6.5,-0.5) {};
			\node[label={[label distance=-0.9mm]270:\tiny $\underline{q}_{i,n_i}$}] (v10) at (9,-0.5) {};
			
			\draw[edge, loosely dotted] (v1) to  (v3);
			\draw[edge, loosely dotted] (v3) to (v4);
			\draw[edge, loosely dotted] (v4) to (v5);
			\draw[edge, loosely dotted] (v5) to (v6);
			\draw[edge, loosely dotted] (v6) to (v10);
			
			\draw[edge] (u1) to (v1);
			\draw[edge] (u3) to (v3);
			\draw[edge] (u4) to (v4);
			\draw[edge] (u6) to (v6);
			\draw[edge] (u10) to (v10);
			
			\draw[loosely dotted] (-0.5,1) .. controls (0.75,1) and (0.75,-0.5) .. (2,-0.5);
			\draw[loosely dotted] (6.5,1) .. controls (7.75,1) and (7.75,-0.5) .. (9,-0.5);
			\draw (2,1) .. controls (4.25,1) and (4.25,-0.5) .. (6.5,-0.5);

			
			\node[label={[label distance=0.7mm]135:\tiny $\overline{q}_{i',0}$}] (w1) at (0.25,-3) {};
			\node[label={[label distance=-1.5mm]135:\tiny $\overline{q}_{i',j'-1}$}] (w4) at (2.75,-3) {};
			\node[label={[label distance=.3mm]170:\tiny $\overline{q}_{i',j',\ell}$}] (w8) at (4.25,-3) {};
			\node[label={[label distance=-1.3mm]90:\tiny $\overline{q}_{i',j'}$}] (w9) at (5.75,-3) {};
			\node[label={[label distance=-2.2mm]90:\tiny $\overline{q}_{i',n_{i'}}$}] (w10) at (8.25,-3) {};
			
			\node (z1) at (0.25,-4.5) {};
			\node (z4) at (2.75,-4.5) {};
			\node (z8) at (4.25,-4.5) {};
			\node (z9) at (5.75,-4.5) {};
			\node (z10) at (8.25,-4.5) {};
		
			\draw[loosely dotted, edge] (w1) to (w4);
			\draw[loosely dotted, edge] (w4) to (w8);
			\draw[loosely dotted, edge] (w8) to (w9);
			\draw[loosely dotted, edge] (w9) to (w10);
		
			\draw[loosely dotted, edge] (z1) to (z4);
			\draw[loosely dotted, edge] (z4) to (z8);
			\draw[loosely dotted, edge] (z8) to (z9);
			\draw[loosely dotted, edge] (z9) to (z10);
			
			\draw[edge] (w1) to (z1);
			\draw[edge] (w4) to (z4);
			\draw (w9) to (z9);
			\draw[edge] (w10) to (z10);
			
			\draw[loosely dotted] (0.25,-3) .. controls (1.5,-3) and (1.5,-4.5) .. (2.75,-4.5);
			\draw[loosely dotted] (5.75,-3) .. controls (7,-3) and (7,-4.5) .. (8.25,-4.5);
			\draw (2.75,-3) .. controls (4.25,-3) and (4.25,-4.5) .. (5.75,-4.5);

		\end{pgfonlayer}
		
		
		\begin{pgfonlayer}{demands}
		
			\draw[edge, orange, line width=1pt,->,>= latex'] (sl) -- (u5);
			 \draw[edge, orange, line width=1pt,->,>= latex'](u5)-- (v5);
			 \draw[edge, orange, line width=1pt,->,>= latex'](v5)-- (w8); 
			 \draw[edge, orange, line width=1pt,->,>= latex'](w8)-- (z8); 
			 \draw[edge, orange, line width=1pt,->,>= latex'](z8)-- (tl);
				
			\draw[blue, line width=1pt] (u1.south) to (u3.south);
			\draw[blue, line width=1pt] (u3.south) .. controls (4.25,1) and (4.25,-0.5) .. (v6.south);
			\draw[edge, blue, line width=1pt] (v6.south) to (v10.south);
			
			\draw[edge, red, line width=1pt] (u1.north) -- (u4.north west) -- (v4.north west) -- (v10.north);
			
			\draw[edge, teal] (w1.north) to (w10.north);	

		\end{pgfonlayer}
		
		\draw [dashed] (-1.0,-1.75) -- (9.5,-1.75);
		
	\end{tikzpicture}
	
	\caption{A simplified schematic of construction in~\cite{amirimfcs}. The figure also contains examples of all possible types of paths that a solution can contain.}
	\label{fig:hardness}
\end{figure}
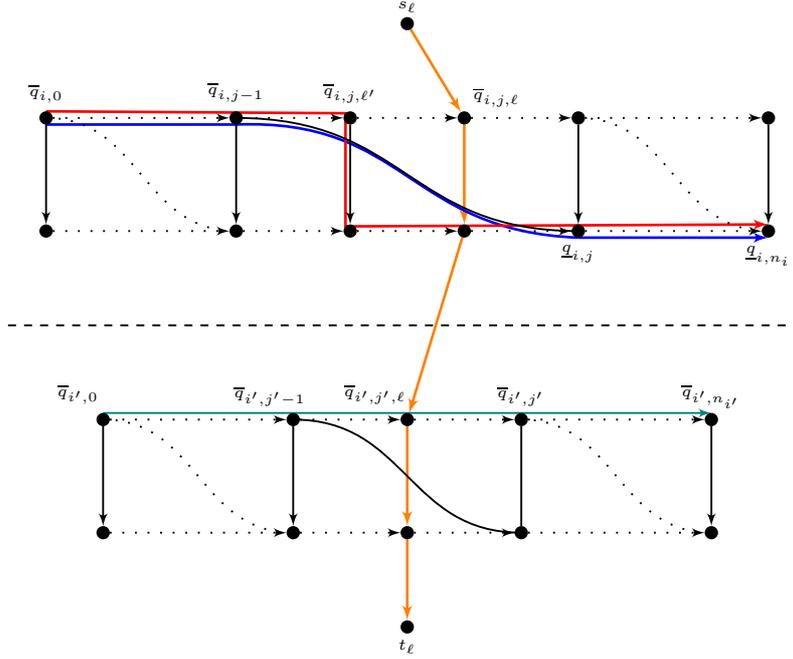

Given an input to the partitioned subgraph isomorphism problem as outlined above with $H$ being 3-regular and bipartite the authors construct their \disp{k}{c} instance as follows:

\subsubsection*{Hardness Construction of~\cite{amirimfcs}}

Let $V(H) = A \dot{\cup} B$ be a bipartition of $H$. Since $H$ is 3-regular it has to hold that $|A| = |B|$ and we can assume w.l.o.g.\ that $A = \{u_1, \ldots u_{h/2}\}$ and $B = \{u_{h/2 + 1}, \ldots, u_h\}$. For every vertex $u_i \in V(H)$ two directed paths $\overline{Q}_i$ and $\underline{Q}_i$ are created. We will only describe the construction of $\overline{Q}_i$, as the construction for $\underline{Q}_i$ is analogous. Fix an enumeration $V_i = \{v_{i,1}, \ldots, v_{i,n_i}\}$. Start with a directed path consisting of the vertices $\overline{q}_{i,0}, \ldots, \overline{q}_{i, n_i}$ in exactly this order. Subdivide this path by adding new vertices  $\overline{q}_{i,j,1}, \ldots, \overline{q}_{i,j,k}$ between consecutive vertices $\overline{q}_{i, j-1}$ and $\overline{q}_{i, j}$. These new vertices will serve as selection of $k$ edges of $H$. Both paths are then connected by adding all possible edges of the type $(\overline{q}_{i,j - 1}, \underline{q}_{i,j})$ and $(\overline{q}_{i,j,\ell}, \underline{q}_{i,j,\ell})$ to form a single \emph{block}. 

Next, link up these blocks by paths of length $5$: Fix an enumeration $e_1, \ldots, e_k$ of the edges in $H$. Given some edge $e_\ell = \{u_{i_1}, u_{i_2}\}$, introduce two new vertices $s_\ell, t_\ell$ and the following edges. Assume that $i_1 < i_2$; then for all $\{v_{i_1,j_1}, v_{i_2, j_2}\} \in E(G)$ add the edges $(s_\ell, \overline{q}_{i_1,j_1,\ell})$, $(\underline{q}_{i_1,j_1,\ell}, \overline{q}_{i_2,j_2,\ell})$, $(\underline{q}_{i_2,j_2,\ell}, t_\ell)$. 
All that is left is to add the demands:
\begin{itemize}
\item $c - 1$ copies of the demand $(\overline{q}_{i,0}, \overline{q}_{i, n_i})$,
\item $c - 1$ copies of the demand $(\underline{q}_{i,0}, \underline{q}_{i, n_i})$,
\item the demand $(\overline{q}_{i,0}, \underline{q}_{i, n_i})$.
\item the demand $(s_\ell, t_\ell)$ for all $1 \leq \ell \leq k$.
\end{itemize}

\noindent\textbf{Rationale behind the construction:}

The idea is that in every block $\overline{Q}_i \cup \underline{Q}_i$ the paths satisfying demands of the type $(s_\ell, t_\ell)$ can only be routed through a single window bordered by two consecutive vertices $\overline{q}_{i,j-1}$ and $\overline{q}_{i,j}$ for some $j$. This encodes a choice for the image of $u_i$ under $\varphi$ by setting $\varphi(u_i) = v_{i,j}$. Adjacency is preserved by construction and so a feasible solution for \disp{k}{c} gives a feasible solution for partitioned subgraph isomorphism and vice versa~\cite{amirimfcs}. Since the instance is of size $O(nk)$ and can be constructed in polynomial-time, all we have to do now, is to show that we can always modify solutions for this instance, so that all paths are actually shortest paths. 

\subsubsection*{From Disjoint Paths to Disjoint Shortest Paths}
With this intuition in mind, we can prove the~\Cref{thm:hardness}.

\begin{proof}[Proof of~\Cref{thm:hardness}]Consider a solution to the \disp{k}{c} instance described above. We show that we can modify such a solution to get one, in which all paths are shortest paths. In that case, we are done and Theorem \ref{thm:hardness} follows. 
	For each type of demand, we show that a path satisfying this demand either already is a shortest path or that it can be rerouted to yield a shortest path. Figure \ref{fig:hardness} schematically showcases all possible paths contained in a solution.
		
	\medskip
	1) Demands of type $(\overline{q}_{i,0}, \overline{q}_{i, n_i})$: There is exactly one path that satisfies this demand and so it necessarily has to be a shortest path (cf. the teal path in \ref{fig:hardness}). The same holds for lower blocking demands $(\underline{q}_{i,0}, \underline{q}_{i, n_i})$.
	
	\medskip
	2) Demands of type $(\overline{q}_{i,j - 1}, \underline{q}_{i, j})$. Here the graph contains paths connecting corresponding source and terminals that are not necessarily a shortest path (cf. the red path in~\ref{fig:hardness}). 
	Let's suppose the solution uses one such path $P$\footnote{In  fact, any solution to the given instance cannot use any of the red paths otherwise violates the congestion of some vertices. However, to avoid complications arising in explaining the details of the proof of previous work, we only show all paths in such a solution in polynomial time can be converted to shortest paths without violating the congestion criteria.}. 
	In that case we reroute $P$ and instead get a shortest path without increasing congestion of any vertex. If $P$ is not a shortest path, it has to contain an edge of type $(\overline{q}_{i,j,\ell}, \underline{q}_{i,j,\ell})$. We replace $P$ by a path $P'$, which goes from $\overline{q}_{i,0}$ to $\overline{q}_{i,j-1}$, takes the edge $(\overline{q}_{i,j-1}, \underline{q}_{i,j})$ and then continues onwards to $\underline{q}_{i,n_i}$ (cf. the blue path in~\ref{fig:hardness}). 
	This is a shortest path and replacing $P$ by $P'$ decreases the congestion of the vertices $\overline{q}_{i,j,1}, \ldots, \overline{q}_{i,j,\ell}, \underline{q}_{i,j,\ell}, \ldots, \underline{q}_{i,j,k}$, while leaving the others as before.
	
	\medskip
	3) Demands of type $(s_\ell, t_\ell)$: By construction, any $s_\ell$-$t_\ell$-path $P$ necessarily has to pass through two different blocks $\overline{Q}_i \cup \underline{Q}_i$ and $\overline{Q}_{i'} \cup \underline{Q}_{i'}$ with $1 \leq i \leq \frac{h}{2} < i' \leq h$ on its way to $t_\ell$; call these blocks $B_1,B_2$. Since an edge has to be chosen from each block it follows that $P$ has to have length at least 5. 	At the same time, however, we will show that $P$ cannot have a length larger than 5 either.
	
	We claim that $P$ cannot include any vertex of type $\overline{q}_{i,j}$ or $\underline{q}_{i,j}$, as there would be a vertex of congestion at least $c+1$ otherwise. Before proving the claim, let us see how to conclude that $P$ should have length exactly $5$, if the claim is correct. Since $P$ goes only through blocks $B_1,B_2$, it has to start with an edge from $s_\ell$ to $B_1$. Say that this edge ends up in $\overline{q}_{i,j,\ell}$. Similarly $P$ has to contain an edge from $B_2$ to $t_\ell$. Denote the tail of this edge by $\overline{q}_{i',j',\ell}$. Additionally $P$ has to contain another edge $e$ from $B_1$ to $B_2$. Since by above claim $P$ does not pass through any vertex of type $\overline{q}_{i,j}$ or $\underline{q}_{i,j}$ it means that $P$ has to connect $B_1$ and $B_2$ by the edge $(\underline{q}_{i,j,\ell},\overline{q}_{i',j',\ell})$. No other edges can be chosen. As such, $P$ has length exactly $5$ and since every such vertical path has to have length at least $5$, $P$ is a shortest path (cf.\ the orange path in~\ref{fig:hardness}).
		
	It remains to prove the claim. We only prove that $\overline{q}_{i,j}$ does not belong to $P$, as $\underline{q}_{i,j}$ can be dealt with analogously. We further assume that $\overline{q}_{i,j}$ is in block $B_1$ (block $B_2$ follows the same principle, we only have to have the name of the block for convenience in the proof).	Observe that due to the demands $(\overline{q}_{i,0}, \overline{q}_{i, n_i})$ and $(\underline{q}_{i,0}, \underline{q}_{i, n_i})$ from 1), every vertex contained in $\overline{Q}_i \cup \underline{Q}_i$ has to have congestion at least $c-1$.
	
	Now consider the path $P'$ satisfying the demand $(\overline{q}_{i,0}, \underline{q}_{i, n_i})$ from 2). W.l.o.g.\ (as we have seen in the previous case), we may assume $P'$ is  a shortest such path. $P'$ has to pass through $\overline{q}_{i,j}$ or $\underline{q}_{i,j+1}$ by construction; in the former case, we are done, since the congestion of $\overline{q}_{i,j}$ will be $c$, thus it cannot be part of $P$. It remains to show that if $P'$ intersects $\underline{q}_{i,j+1}$, we cannot have $\overline{q}_{i,j}$ in $P$.
	
	For the sake of contradiction, suppose $P$ goes through $\overline{q}_{i,j}$. Denote the first vertex of $\underline{Q}_i \cap P'$ w.r.t.\ the topological order of the graph by $\underline{q}_{i,j'}$. Clearly every vertex on $\underline{Q}_i \cap P'$ has to have congestion exactly $c$. If $j'\le j$, then since $P$ should leave the block $B_1$ from one of the vertices in $\underline{Q}_i$ and since every such vertex has to appear after $\overline{q}_{i,j}$ in the topological order of the graph, they already have congestion $c$ and they cannot be part of $P$, a contradiction. On the other hand if $j'>j$, then $\overline{Q}_i \cap P'$ already includes the vertex $\overline{q}_{i,j}$ thus its congestion will be $(c-1)+1+1 = c+1$; a contradiction.
\end{proof}

\medskip
\noindent\textbf{Hardness of $k$-Edge Disjoint Shortest Paths ($k$-EDSP) Problem on Planar DAGs}
A natural limitation is to consider the problem on more restricted graph classes, in particular on planar graphs. One can observe that our algorithm with slight modification works for the $k$-EDSP on DAGs. It is natural to ask whether the problem on planar DAGs is easier? We answer this question negatively in the case of congestion one. In general undirected graphs, the vertex variant of the problem is known to be W[1]-hard by the construction provided in~\cite{undirectedshortest2}. They used the colored clique problem to prove this claim. By using the colored clique problem we show there is no $n^{o(k)}$ algorithm for $k$-EDSP on planar acyclic graphs unless ETH fails.

It is known that there does not exist any algorithm that decides the existence of a clique of size $k$ in a graph of size $n$ in time $f(k)n^{o(k)}$, unless all problems in SNP can be solved in subexponential time \cite{chen} (In particular ETH would have to fail). Consider now the \textsc{Multi-Colored Clique} problem. In it one has to decide, for a given graph $G$ with coloring $c: V(G) \to \{1, \ldots, k\}$, whether there exists a colorful clique in $G$ on $k$ vertices.

The hardness result from \cite{chen} directly applies to the \textsc{Multi-Colored Clique} problem via a simple and well-known reduction. Given an instance $(G,k)$ with $n$ vertices of \textsc{Clique} construct an instance $(G', c)$ of \textsc{Multi-Colored Clique} as follows: for every vertex $v \in V(G)$ add $k$ vertices $v_1, \ldots, v_k$ to $G'$ and set $c(v_i) = i$. Given two different vertices $u, v \in V(G)$, add all possible edges $\{u_i, v_j\}$ to $G'$, if $\{u,v\} \in E(G)$. Clearly, $(G,k)$ has a solution, if and only if $(G', c)$ has one. We show that this hardness result also carries over to the $k$-Edge Disjoint Shortest Paths problem on planar DAGs. In particular we prove the following theorem.

\begin{apthm}{thm:sdpplanar}
Unless ETH fails, the $k$-EDSP problem does not admit any $f(k)n^{o(k)}$ algorithm even on planar DAGs.
\end{apthm}

To do so, we extend the hardness proof of $k$-Vertex Disjoint Shortest Paths on general undirected graphs that appeared in~\cite{undirectedshortest2} to $k$-Edge Disjoint Shortest Paths on Planar DAGs.
\begin{proof}
	Let $I=(G, c,k)$ be an instance of the \textsc{Multi-Colored Clique} problem. Let $V(G) = \{v_1, \ldots, v_n\}$ and w.l.o.g.\ we assume that $c(v_i) \leq c(v_j)$, if $i < j$. We construct an instance $I'=(G',\{(s_1,t_1,\ldots,s_{2k},t_{2k})\})$  of $2k$-EDSP problem on planar DAGs as follows. 

Start with an $n \times n$ planar directed grid, that is we direct edges of an undirected grid from the left to the right and from the top to the bottom. Denote the vertex that is placed on the intersection of the $i$'th row and the $j$'th column by $w^{\text{out}}_{i,j}$. Surrounding this grid we place 2 other vertices $\tilde{s}^\star_i$ and $\tilde{t}^\star_i$ for all indices $1 \leq i \leq n$ 
where $\star \in \{\text{h}, \text{v}\}$ is the direction. Additionally, introduce edges $(\tilde{s}^h_i, w^{\text out}_{i,1}), (w^{\text out}_{i,n}, \tilde{t}^h_i), (\tilde{s}^v_i, w^{\text out}_{1,i}), (w^{\text out}_{n,i}, \tilde{t}^v_n)$. Clearly, the constructed graph so far is a planar DAG. 

\begin{figure}
	\input{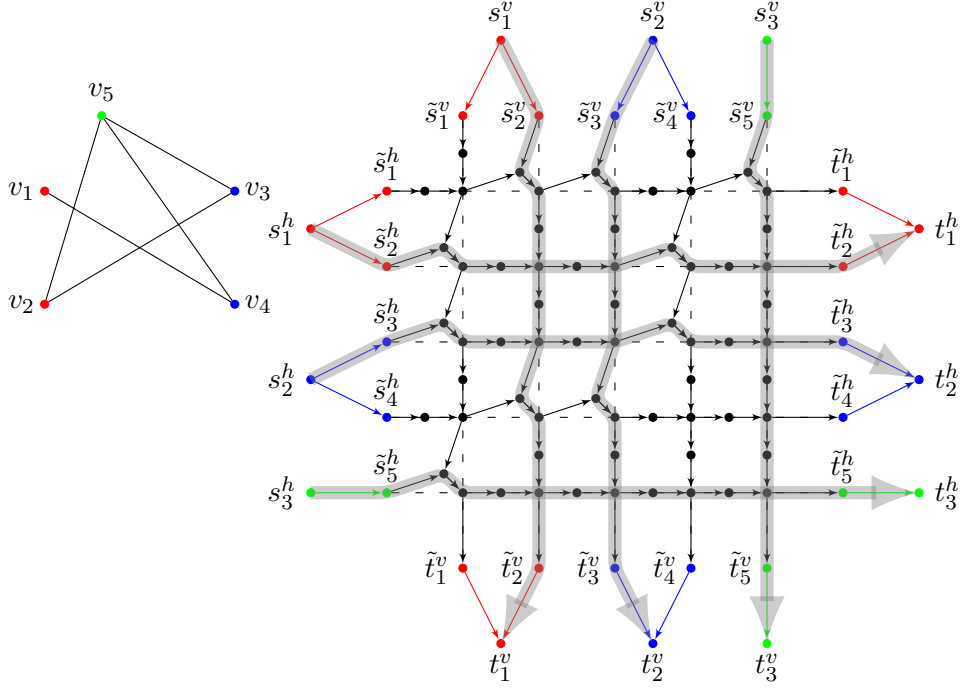}
	\caption{An instance of colored clique problem is given at left, a corresponding edge-disjoint shortest paths instance on a planar DAG is drawn in the right. The highlighted gray edges showing the solution to the disjoint shortest paths and consequently to the colored-clique problem: the graph induced on the corresponding vertices of the selected paths, in this case, vertices with indices $\{2,3,5\}$, is a colorful clique on $k$ vertices. Dashed lines are showing the initial grid-like structure.}\label{fig:edgedisjoint}
\end{figure}

Now, split up every edge and add a new vertex in between: i.e.\ a horizontal edge $(w^{\text out}_{i,j - 1}, w^{\text out}_{i,j})$ is replaced with a directed path of length two with edges $(w^{\text out}_{i,j-1}, w^{\text in}_{i,j - \varepsilon})$ and $(w^{\text in}_{i,j - \varepsilon}, w^{\text out}_{i,j})$, where $w^{\text in}_{i-1,j-\varepsilon}$ is the new vertex. Similarly, a vertical edge $(w^{\text out}_{i-1,j}, w^{\text out}_{i,j})$ is replaced by $(w^{\text out}_{i-1,j}, w^{\text in}_{i - \varepsilon,j})$ and $(w^{\text in}_{i - \varepsilon,j} , w^{\text out}_{i,j})$, where $w^{\text in}_{i - \varepsilon,j}$ is the new vertex. 
	For all edges $(\tilde{s}^h_i, w^{\text out}_{i,1})$ and $(\tilde{s}^v_i, w^{\text out}_{1,i})$ we do the same and introduce the vertices $w^{\text in}_{i,1-\varepsilon}$ and $w^{\text in}_{1-\varepsilon, i}$, respectively.

	Next, we merge some of the vertices, while keeping the graph planar and acyclic. That is, we merge all vertices $w^{\text in}_{i-\varepsilon,j}$ and $w^{\text in}_{i, j-\varepsilon}$ with $i \neq j$, unless $v_i$ and $v_j$ differ in color and are adjacent in $G$. Call the resulting vertex $w^{\text in}_{i,j}$. Notice, that the length of the rows and columns stays the same, thus, they all have the same length (cf. Figure~\ref{fig:edgedisjoint}). 
	
	To finish the graph construction we add $2$ additional vertices $s^\star_\ell, t^\star_\ell$ for all indices $\ell\in [k]$ and directions $\star\in\{h,v\}$ and connect them to the rest of the graph as follows: If $c(v_i) = \ell$, then add the edges $(s^h_\ell, \tilde{s}^h_i), (s^v_\ell, \tilde{s}^v_i), (\tilde{t}^h_i, t^h_\ell), (\tilde{t}^v_i, t^v_\ell)$. Finally, define demands $(s^h_\ell, t^h_\ell)$ and $(s^v_\ell, t^v_\ell)$ for all $\ell \in [k]$ to finish construction of instance $I'$. ~\Cref{fig:edgedisjoint} illustrates the construction. 
	
	\textsc{Multi-Colored Clique $\rightarrow$ $2k$-EDSP:} Let $v_{i_1}, \ldots, v_{i_k}$ be the solution to $I$, with $c(v_{i_j}) = j$. Then $I'$ has a solution as follows: It consists of the paths that go from $s^h_j$ (resp.\ $s^v_j$) to $t^h_j$ (resp. $t^v_j$) via the $i_j$'th row (resp.\ column).
	
	These paths can only share an edge, if they are \emph{orthogonal} to one another, that is, if one is vertical and the other horizontal.
	Take two such paths, say the horizontal path corresponding to $v_{i_j}$ and the vertical path corresponding to $v_{i_\ell}$. The only edge they could in principle share is $(w^{\text in}_{i_j, i_\ell}, w^{\text out}_{i_j, i_\ell})$. However, $v_i$ and $v_j$ have to be adjacent in $G$ and cannot have the same color. But then $w^{\text in}_{i_j,i_\ell}$ has never been formed and so the paths have to be disjoint. Since all of them are shortest paths (they are rows/columns in the grid), they form a valid solution for $I'$.

	For the converse implication, i.e.\ \textsc{$2k$-EDSP $\rightarrow$ Multi-Colored Clique}, suppose that there is a solution $P^h_1, \ldots, P^h_k, P^v_1, \ldots, P^v_k$ for $I'$, where $P^\star_\ell$ (for $\ell \in [k], \star\in\{h,v\}$), satisfies the demand $(s^\star_\ell, t^\star_\ell)$. Since these are shortest paths they cannot leave their respective row or column.
	Let's say $P^h_\ell$ uses the $i_\ell$'th row, then by construction, $P^v_\ell$ has to use the $i_\ell$'th column, since every other vertical shortest path (that is, every other column) for the color $\ell$ shares an edge with $P^h_\ell$. 
	 We now want to show that the corresponding vertices $v_{i_1},\ldots,v_{i_k}$ form a solution for the \textsc{Multi-Colored Clique} instance. 
	
	It is clear that $c(v_{i_\ell}) = \ell$. Furthermore, all of the vertices have to be connected, as otherwise their corresponding paths would share an edge. Say, there is no edge between $v_{i_\ell}$ and $v_{i_j}$, then $P^h_\ell$ and $P^v_j$ would share the edge $(w^{\text in}_{i_\ell, i_j}, w^{\text out}_{i_\ell, i_j})$: both $w^{\text in}_{i_\ell - \varepsilon, i_j}$ and $w^{\text in}_{i_\ell, i_j-  \varepsilon}$ would have been merged. As such, $v_{i_1}, \ldots, v_{i_k}$ forms a solution to the \textsc{Multi-Colored Clique} instance.
	Since the $2k$-EDSP instance was constructed in polynomial time and has $O(n^2 + k)$ vertices, the theorem follows. 
\end{proof}

	\clearpage
	\section{Conclusion and Open Problems}
We introduced the theoretical study of shortest disjoint paths with congestion. The problem is practically relevant since in real networks congestion is inevitable, moreover, also we are interested in short routes. From a theoretical perspective, the problem is derived from two already well-known problem sets of routing with congestion and routing shortest paths.	We have provided algorithms and hardness results mostly centered around acyclic graphs. Since the concept is new, there are many open problems ahead.
	
The major open problem is to find out if the \dsp{k}{c} problem in general graphs admits an algorithm with running time $f(k)n^{g(d)}$. Approaching this problem might be a bit ambitious, since the complexity of the simpler problem of \sd{k} is wide open in digraphs, even for a restricted case of $k=3$. 
	
	We have presented an algorithm for undirected graphs with running time of form $n^{f(k)}$. An interesting intermediate problem is  to solve \dsp{k}{c} in undirected graphs in time $f(k)n^{g(d)}$.  
	
Even though the running time of our algorithm is close to the lower bound, it is not clear if it is possible to change the reduction step so that it requires routing fewer paths. More generally is there an algorithm for DAGs with run time $O(f(k) n^{d+c_1})$, for some constant $c_1$? 
		
We showed the edge-disjoint variant of the problem on planar DAGs is W[1]-hard, which means the algorithm of Kobayashi and Sako~\cite{shortest2} cannot turn to FPT under ETH assumption, but it is not clear if the vertex disjoint variant of the problem is hard on planar graphs.
	
	\medskip
	
\textbf{Acknowledgements. }We would like to thank Sebastian Siebertz for his valuable feedback on the draft version of this paper. In addition, we would like to thank Rajesh Chitnis, Mohammad Roghani, and Reza Soltani for fruitful discussions on the disjoint shortest paths problem.
\clearpage
		\bibliographystyle{plain}	\bibliography{ref}  

	\end{document}